\RequirePackage{fix-cm}
\documentclass[smallcondensed]{svjour3}

\smartqed  % flush right qed marks, e.g. at end of proof

% Encoding (mainly for Agda source files, which use unicode characters)
\usepackage{ucs}
\usepackage[utf8x]{inputenc}

% Fonts
\usepackage{mathptmx}  % Times for the main font
\usepackage{helvet}    % Helvetica for code fragments
\usepackage{textgreek} % Symbol for greek letters (similar to Helvetica)

\usepackage{amsmath}
\usepackage{amssymb}

\usepackage{tikz-cd}
\usetikzlibrary{decorations.pathmorphing}
\usetikzlibrary{arrows}
\usetikzlibrary{matrix}

\tikzcdset{sdiag/.style={
    arrows={decorate,decoration={snake,amplitude=.3mm,segment length=1.5mm,post length=1mm}},
    labels={font=\small},
    column sep=large,
    row sep=large}}

% Code fragments
\usepackage[bw]{agda}         % Has been slightly modified to use \textsf and to remove package tipa
\AgdaNoSpaceAroundCode
\setlength{\mathindent}{0pt}  % Do not indent code fragments

% Gray background
\usepackage{mdframed}
\surroundwithmdframed[backgroundcolor=black!5!white,
                      innerleftmargin=5pt,
                      innertopmargin=10pt,
                      linecolor=white]{code}

% For inline code
\newcommand{\agda}[1]{\mbox{\small\textsf{#1}}}
\newcommand{\agdakw}[1]{\mbox{\small\textsf{\textbf{#1}}}}

\usepackage{hyperref}

%%%%%%%%%%%%%%
%            %
%   Macros   %
%            %
%%%%%%%%%%%%%%

\makeatletter
\def\operator@font{\sf}
\makeatother

%%%%%%%%%%%%%
% Universes %
%%%%%%%%%%%%%

\newcommand{\Type}{\mathsf{Type}}

%%%%%%%%%%%%%%%%%
% Dependent sum %
%%%%%%%%%%%%%%%%%

\newcommand{\fst}{\operatorname{fst}}
\newcommand{\snd}{\operatorname{snd}}

%%%%%%%%%%%%%%%%%
% Identity type %
%%%%%%%%%%%%%%%%%

\newcommand{\idpS}{\mathsf{idp}}
\newcommand{\idp}[1]{\idpS_{#1}}

\newcommand{\concat}{\mathbin{\cdot}}
\newcommand{\inv}{^{-1}}
\newcommand{\transport}{\operatorname{transport}}

%%%%%%%%%%%%%
% Unit type %
%%%%%%%%%%%%%

\newcommand{\Unit}{\mathbf{1}}
\newcommand{\ttt}{\star_\Unit}

%%%%%%%%%%%%%%%%%%%
% Natural numbers %
%%%%%%%%%%%%%%%%%%%

\newcommand{\N}{\mathbb{N}}

%%%%%%%%%%%%
% Integers %
%%%%%%%%%%%%

\newcommand{\Z}{\mathbb{Z}}

%%%%%%%%%%%
% Various %
%%%%%%%%%%%

% ap
\newcommand{\apS}{\operatorname{ap}}
\newcommand{\ap}[1]{\apS_{#1}}
% ap_2
\newcommand{\apiiS}{\operatorname{ap}^2}
\newcommand{\apii}[1]{\apiiS_{#1}}
% Equality by definition
\newcommand{\defeq}{\mathrel{:=}}
% Identity function
\newcommand{\id}{\mathsf{id}}

%%%%%%%%%%%
% Spheres %
%%%%%%%%%%%

\newcommand{\Sn}[1]{\mathbb{S}^{#1}}
\newcommand{\north}{\mathsf{north}}
\newcommand{\south}{\mathsf{south}}
\newcommand{\merid}{\operatorname{merid}}

%%%%%%%%%%%%%%%%%%%%%
% N-indexed colimit %
%%%%%%%%%%%%%%%%%%%%%

\newcommand{\inn}{\operatorname{in}}
\newcommand{\push}{\operatorname{push}}

%%%%%%%%%%%%
% Pushouts %
%%%%%%%%%%%%

\newcommand{\inl}{\operatorname{inl}}
\newcommand{\inr}{\operatorname{inr}}

%%%%%%%%%
% Wedge %
%%%%%%%%%

% The map S^n \/ S^n -> S^n
\newcommand{\fold}{\nabla}

%%%%%%%%%%%%%%%%%%%%%%
% James construction %
%%%%%%%%%%%%%%%%%%%%%%

\newcommand{\JiA}{J_\infty A}
\newcommand{\epsiloni}{\varepsilon_\infty}
\newcommand{\alphai}{\alpha_\infty}
\newcommand{\deltai}{\delta_\infty}
\newcommand{\gammai}{\gamma_\infty}
\newcommand{\etai}{\eta_\infty}

\newcommand{\innJ}{\operatorname{in}^J}
\newcommand{\pushJ}{\push^J}

\newcommand{\tof}{\operatorname{to}}
\newcommand{\from}{\operatorname{from}}

\newcommand{\pp}{\mathop{\hat{\times}}}

% \newcommand{\Hopf}{\operatorname{Hopf}}

% %%%%%%%%%%%%%%%%%%
% % Gysin sequence %
% %%%%%%%%%%%%%%%%%%

% \newcommand{\tildeE}{\widetilde{E}}
% \newcommand{\tildeF}{\widetilde{F}}
% \newcommand{\Phit}{\widetilde{\Phi}}

% %%%%%%%%%%%%%%
% % Appendix A %
% %%%%%%%%%%%%%%

% \newcommand{\lcr}{\llbracket}
% \newcommand{\rcr}{\rrbracket}

% \newcommand{\typ}{\ \mathsf{type}}

% \newcommand{\depth}{\operatorname{depth}}
% \newcommand{\cohf}{\mathsf{coh}}
% \newcommand{\coh}[3]{\cohf_{#1.#2}(#3)}
% \newcommand{\cohhf}[3]{\mathbf{coh}^{#1}_{#2.#3}}
% \newcommand{\cohh}[4]{\cohhf{#1}{#2}{#3}(#4)}
% \newcommand{\contr}{\ \mathsf{contr}}
% \newcommand{\ctx}{\ \mathsf{ctx}}
% \newcommand{\Glob}{\operatorname{Glob}}
% \newcommand{\Ob}{\operatorname{Ob}}
% \newcommand{\Hom}{\operatorname{Hom}}

% \newcommand{\IID}{\mathsf{I}}
% \newcommand{\iid}{\mathsf{i}}
% \newcommand{\vdashi}{\vdash_{\infty}}
% \newcommand{\vdashML}{\vdash_{\mathsf{ML}}}
% \newcommand{\colonML}{:_{\mathsf{ML}}}

% \newcommand{\llp}{\llparenthesis}
% \newcommand{\rrp}{\rrparenthesis}
% \newcommand{\tinfgpd}{\mathcal{T}_\infty}
% \newcommand{\tml}{\mathcal{T}_{\mathsf{ML}}}

% \newcommand{\glob}{^{\mathsf{G}}}

% %%%%%%%%%%%%%%
% % Appendix B %
% %%%%%%%%%%%%%%

% \newcommand{\hub}{\operatorname{hub}}
% \newcommand{\spoke}{\operatorname{spoke}}
% \newcommand{\lift}{\operatorname{lift}}
% \newcommand{\lemm}{\operatorname{lemma}}
% \newcommand{\tHopf}{\operatorname{tHopf}_3}
\newcommand{\apd}[1]{\operatorname{apd}_{#1}}

\newcommand{\Susp}{\upSigma}
\renewcommand{\Omega}{\upOmega}

\newenvironment{diagram}{%
\begin{figure}[htbp]
  \renewcommand{\figurename}{Diagram}
  \centering}%
{\end{figure}}

\newcommand{\epsilonJ}{\varepsilon_{J}}
\newcommand{\alphaJ}{\alpha_{J}}
\newcommand{\deltaJ}{\delta_{J}}
\newcommand{\gammaJ}{\gamma_{J}}
\newcommand{\etaJ}{\eta_{J}}

\newcommand{\urlgithub}{\url{https://github.com/guillaumebrunerie/JamesConstruction}}

\journalname{Journal of Automated Reasoning}

\begin{document}

\title{The James construction and $\pi_4(\Sn3)$ in homotopy type theory%
  \thanks{This material is based upon work supported by the National Science Foundation under
    agreement Nos. DMS-1128155 and CMU 1150129-338510.%
    % Any opinions, findings and conclusions or
    % recommendations expressed in this material are those of the author(s) and do not necessarily
    % reflect the views of the National Science Foundation.%
  }}
\author{Guillaume Brunerie}

\institute{Guillaume Brunerie \at
  Institute for Advanced Study, Princeton, NJ, USA \\
  \email{guillaume.brunerie@ias.edu}}

\date{Uploaded: 27th October 2017}

\maketitle

\begin{abstract}
  In the first part of this paper we present a formalization in Agda of the James construction in
  homotopy type theory. We include several fragments of code to show what the Agda code looks like,
  and we explain several techniques that we used in the formalization. In the second part, we use
  the James construction to give a constructive proof that $\pi_4(\Sn3)$ is of the form $\Z/n\Z$
  (but we do not compute the $n$ here).  \keywords{homotopy type theory \and James construction \and
    Agda \and homotopy groups of spheres}
% \subclass{MSC code1 \and MSC code2 \and more}
\end{abstract}

\section{Introduction}\label{sec:introduction}
\label{intro}

In this paper we define the James construction in homotopy type theory and we prove that
$\pi_4(\Sn3)$ is of the form $\Z/n\Z$. We have formalized the most technical part (the James
construction) in Agda and we present here numerous fragments of codes and remarks on the
formalization\footnote{The code is available at \urlgithub{} and has been tested with Agda
  2.5.2. The code fragments are generated directly from the source code using
  \mbox{\texttt{agda -{}-latex}} and a custom script to extract the relevant parts.}.
This article is based on chapter 3 of the author’s PhD thesis (\cite{phd}), but the formalization in
Agda of the James construction is new. In \cite{phd}, we also proved that $n$ is equal to $2$, but
this is out of the scope of the present paper.

The general idea of the James construction is that given a type $A$ pointed by $\star_A:A$, we
consider the higher inductive type $JA$ generated by the constructors
\begin{align*}
  \epsilonJ &: JA, \\
  \alphaJ &: A\to JA\to JA, \\
  \deltaJ &: (x:JA)\to x=_{JA}\alphaJ(\star_A, x).
\end{align*}
We can see $JA$ as the free monoid on $A$, where $\star_A$ is identified with the neutral
element. The function $\alphaJ$ builds sequences of elements of $A$ ($\epsilonJ$ corresponding to
the empty sequence), and the function $\deltaJ$ allows us to remove occurences of $\star_A$.

This higher inductive type is recursive, given that $JA$ itself appears in the domains of $\alphaJ$
and $\deltaJ$, and we would like to turn it into a non-recursive one. So we will define (in section
\ref{sec:JiA}) a sequence of types $(J_nA)_{n:\N}$ together with maps
$(i_n : J_nA\to J_{n+1}A)_{n:\N}$ such that the type $\JiA$, defined as the sequential colimit of
$(J_nA)_{n:\N}$, is equivalent to $JA$.

This equivalence between $JA$ and $\JiA$ is interesting because on the one hand we can show that
$JA$ is equivalent to $\Omega\Susp A$ when $A$ is connected (see section \ref{sec:equivJA}), and on
the other hand we can study the lower homotopy groups of $\JiA$ (see section
\ref{sec:connin}). Those two facts together allow us to prove that $\pi_4(\Sn3)$ is equal to
$\Z/n\Z$, where $n$ is the Whitehead product of the generator of $\pi_2(\Sn2)$ with itself (see
sections \ref{sec:wproducts} and \ref{sec:pi4s3}).

The main technical part of the paper is the proof that $JA$ and $\JiA$ are equivalent. The idea is
simple: we construct two functions going back and forth (in section \ref{sec:twofunctions}) and we
prove that they are inverse to each other (in section \ref{sec:twocomposites}). But $JA$ and $\JiA$
having quite different definitions, it requires careful manipulation of 2-dimensional and
3-dimensional diagrams.

Note that we have formalized only the James construction (sections \ref{sec:JiA} to
\ref{sec:twocomposites}), as it is the most technical part. We’re planning to formalize the rest in
the future, but it hasn’t been done at the time of this writing.

This definition of $JA$ and of the $(J_nA)_{n:\N}$ was suggested to me by André Joyal.

\section{Remarks on the formalization}\label{sec:formalization}

In this section we touch on three topics that are used extensively in the formalization: higher
inductive types using rewrite rules, cubical reasoning and coherence operations.

\subsection{Higher inductive types}

We start by recalling the definition of homotopy pushouts using higher inductive types (see for
instance chapter 6 of \cite{hottbook}), and we explain how we implemented them in Agda. As is usual
in homotopy type theory, we will call them simply “pushouts”, as this is the only sort of pushout of
types that we can define. Let’s consider three types $A$, $B$, $C$ and two functions $f:C\to A$,
$g:C\to B$,
\[
\begin{tikzcd}
  A & C \arrow[l,"f"'] \arrow[r,"g"] & B.
\end{tikzcd}\]
Such a diagram is called a \emph{span}. The \emph{pushout} of this span is the higher inductive type
$A\sqcup^CB$ generated by the constructors
\begin{align*}
\inl&:A\to A\sqcup^CB,\\
\inr&:B\to A\sqcup^CB,\\
\push&:(c:C)\to\inl(f(c))=_{A\sqcup^CB}\inr(g(c)).
\end{align*}
In particular, we have the square
\[
\begin{tikzcd}
  C \arrow[r,"g"] \arrow[d,"f"'] \arrow[rd, phantom, "\scriptstyle\push"] & B \arrow[d,dashed,"\inr"]\\
  A \arrow[r,dashed,"\inl"'] & A\sqcup^CB
\end{tikzcd}
\]
which is \emph{commutative}, in the sense that it is commutative up to identity paths~: the witness
of commutativity $\push$ is a pointwise path between the two functions corresponding to the two
compositions of the sides of the square.
The idea is that we start with the disjoint sum $A+B$, and for every element $c$ of $C$ we add a new
path from $\inl(f(c))$ to $\inr(g(c))$.

The induction principle states that, given a dependent type $P:A\sqcup^CB\to\Type$, we can define a
function $h:(x:A\sqcup^CB)\to P(x)$ by
\begin{align*}
  h &: (x:A\sqcup^CB)\to P(x),\\
  h(\inl(a)) &\defeq \inl^*(a),\\
  h(\inr(b)) &\defeq \inr^*(b),\\
  \apd h(\push(c)) &\defeq \push^*(c),
\end{align*}
where we have
\begin{align*}
  \inl^* &: (a : A) \to P(\inl(a)),\\
  \inr^* &: (b : B) \to P(\inr(b)),\\
  \push^* &: (c : C) \to \inl^*(f(c)) =^P_{\push(c)} \inr^*(g(c)).
\end{align*}
We are using here the notion of \emph{dependent paths} (see \cite{cubicalDan}): given a type $X$, a
dependent type $P:X\to\Type$, a path $p:x=x'$ in $X$ and two points $u:P(x)$ and $v:P(x')$, the type
\[u=^P_pv\] represents paths in $P$ going from $u$ to $v$ and lying over $p$. Given $h:(x:X)\to
Q(x)$ and $q:x=_Xx'$, the term $\apd{h}(q)$ is the application of $h$ to $q$, which is a dependent
path in $Q$, over $q$, and from $h(x)$ to $h(x')$.

We are using the same type theory as in \cite{hottbook}, in particular we take the first two
equalities (defining $h(\inl(a))$ and $h(\inr(b))$) to be judgmental equalities whereas the equality
between $\apd h(\push(c))$ and $\push^*(c)$ is only taken as a propositional equality.

In the formalization, higher inductive types are implemented using rewrite rules, which is an
experimental feature of Agda allowing the user to add (almost) arbitrary reduction rules to the type
theory, see \cite{rewrite}. It gives a cleaner implementation of higher inductive types than what
was used so far in both Agda and Coq (Dan Licata’s trick), as it doesn’t rely on declaring a fake
inductive type and inconsistent axioms and then trusting the hiding mechanism to only export the
part which is consistent. Here we simply postulate (i.e. introduce axioms for) the type, the
constructors, the elimination rule and the reduction rules, and then we tell Agda to treat the
reduction rules for points as judgmental equalities.

The corresponding Agda code is shown in fragment \ref{code:Pushout}. Here are some explanations to
help with the understanding:
\begin{itemize}
\item The variables \agda{i}, \agda{j} and \agda{k} are universe levels (they are declared at the
  top of the file, not shown here) and \agda{lsucc} and \agda{lmax} are operations of universe
  levels. Agda has explicit universe polymorphism and no cumulativity, which is why we need three
  different universe levels in order to have the most general notion of pushout.

\item The type \agda{Span} is defined as a record type with fields \agda{A}, \agda{B}, \agda{C},
  \agda{f} and \agda{g}. In order to construct a span, we use the syntax \agda{span A B C f g}
  (because we declared \agda{span} as the constructor), and given a span \agda{d}, the command
  \agda{\agdakw{open} Span d} brings the components \agda{A}, \agda{B}, \agda{C}, \agda{f} and
  \agda{g} of \agda{d} into scope.

\item We use the notion of \emph{anonymous module} (modules named "\agda{\_}"): the idea is simply
  to factor out common arguments of several definitions.

\item We use the notation \agda{u == v} for the identity type $u=v$, \agda{idp} for the identity
  path (also known as reflexivity), and \agda{u == v [ P ↓ p ]} for $u=^P_pv$. Dependent paths are
  implemented by induction on $p$, which means that the type $u=^P_{\idp{x}}v$ is equal to the type
  $u=_{P(x)}v$ by definition.

\item The rewriting mechanism works as follows. First Agda has to be started with the option
  \texttt{-{}-rewriting} (not shown here) to enable it. Then we declare the rewriting relation using
  the pragma \agda{\{-\# \agdakw{BUILTIN REWRITE} \_$\mapsto$\_ \#-\}}, see fragment
  \ref{code:rewrite}. Finally, we declare individual rewrite rules using \agda{\{-\#
    \agdakw{REWRITE} rew \#-\}}.

\item The reduction rule \agda{push-βd'} is primed simply because we usually want its arguments
  \agda{inl*}, \agda{inr*} and \agda{push*} to be implicit. We define \agda{push-βd} afterwards with
  those arguments made implicit (not shown here). Moreover, the \agda{d} at the end is there because
  we will also need the non-dependent reduction rule \agda{push-β}, which has a slightly different
  type.
\end{itemize}

\begin{figure}[htbp]
  \renewcommand{\figurename}{Code fragment}
  \centering
  \input{code/Pushout}
  \caption{The definition of pushouts}
  \label{code:Pushout}
\end{figure}

\begin{figure}[htbp]
  \renewcommand{\figurename}{Code fragment}
  \centering
  \input{code/rewrite}
  \caption{The type of rewrite rules}
  \label{code:rewrite}
\end{figure}

When \agda{P} is constant, we obtain the non-dependent elimination rule \agda{Pushout-rec}, see
fragment \ref{code:PushoutRec}. The function \agda{↓-cst-in} turns a homogeneous path into a
dependent path in the constant fibration, and the function \agda{apd=cst-in} turns an equality
\agda{apd f p == ↓-cst-in q} into the equality \agda{ap f p == q}, where \agda{f} is a non-dependent
function.

\begin{figure}[htbp]
  \renewcommand{\figurename}{Code fragment}
  \centering
  \input{code/PushoutRec}
  \caption{The non-dependent elimination rule and the associated reduction rule}
  \label{code:PushoutRec}
\end{figure}

We use the same scheme for all higher inductive types. For $JA$, the induction principle states that
given a dependent type $P:JA\to\Type$, a function $f:(x:JA)\to P(x)$ can be defined by
\begin{align*}
  f &: (x:JA)\to P(x),\\
  f(\epsilonJ) &\defeq \epsilonJ^*,\\
  f(\alphaJ(a,x)) &\defeq \alphaJ^*(a,x,f(x)),\\
  \apd{f}(\deltaJ(x)) &\defeq \deltaJ^*(x,f(x)),
\end{align*}
where we have
\begin{align*}
  \epsilonJ^*&:P(\epsilonJ),\\
  \alphaJ^*&:(a:A)(x:JA)\to P(x)\to P(\alphaJ(a,x)),\\
  \deltaJ^*&:(x:JA)(y:P(x)) \to y=^P_{\deltaJ(x)}\alphaJ^*(\star_A,x,y).
\end{align*}
Note that $f$ is used recursively in $f(\alphaJ(a,x))$ and in $\apd{f}(\deltaJ(x))$, because $JA$ is
a recursive higher inductive type. The code is shown in fragment \ref{code:defJA}.

\begin{figure}[htbp]
  \renewcommand{\figurename}{Code fragment}
  \centering
  \input{code/defJA}
  \caption{The definition of $JA$}
  \label{code:defJA}
\end{figure}

\subsection{Cubical reasoning}

In various places, we use cubical reasoning as in \cite{cubicalDan}. The main idea is that a
dependent path in an identity type
\[u=^{\lambda x.f(x)=g(x)}_pv\]
should be seen as a square
\[
\begin{tikzcd}[sdiag]
  f(x) \arrow[r,"\ap f(p)"] \arrow[d,"u"'] & f(x') \arrow[d,"v"]\\
  g(x) \arrow[r,"\ap g(p)"'] & g(x')
\end{tikzcd}
\]
In the formalization, the type of such squares is written \agda{Square u (ap f p) (ap g p) v},
i.e. we give the sides in the order left/top/bottom/right.  We use this idea in several
situations. One is when defining a function of type $(x:X)\to f(x)=_Yg(x)$ where $X$ is a higher
inductive type. If we use the elimination rule for $X$, for the path constructors we will need to
construct a dependent path in the dependent type $\lambda x.f(x)=_Yg(x)$, i.e. a square in
$Y$. Another situation is when we want to apply a function $h:(x:X)\to f(x)=_Yg(x)$ to a path
$p:x=x'$ in $X$. Using $\apd{}$ we obtain a dependent path in the dependent type above, so it
makes sense to see it as the following square (called the \emph{naturality square} of $h$ on $p$)
\[
\begin{tikzcd}[sdiag]
  f(x) \arrow[r,"\ap f(p)"] \arrow[d,"h(x)"'] & f(x') \arrow[d,"h(x')"]\\
  g(x) \arrow[r,"\ap g(p)"'] & g(x')
\end{tikzcd}
\]

There are similar results for cubes. In particular, a dependent path in a square type can be seen as
a cube, and similarly for a dependent square in a path type.

% \[
% \begin{tikzcd}[sdiag, row sep=huge]
%   \bullet \arrow[r] \arrow[r, bend left=50]
%   \arrow[d, bend right=50] \arrow[d] &
%   \bullet \arrow[d] \arrow[d, bend left=50]\\
%   \bullet \arrow[r, bend right=50] \arrow[r] & \bullet
%   % b \arrow[rru,"\idpS"',near end] \arrow[rr,phantom,"\alpha"] & & c\\
%   % d \arrow[u,"s"] \arrow[r,"v"'] \arrow[rr, bend right=50, "vw"'{name=VW}]
%   % & e \arrow[ru,"t"{name=T}] \arrow[to=VW, phantom, "vw^="] & f \arrow[l,"w"] \arrow[u,"u"'] \arrow[from=T, phantom, "\eta"]
% \end{tikzcd}
% \]

\subsection{Coherence operations}

We often have to compose together paths, squares, 2-dimensional paths, and so on, in a wide variety
of ways. Even though all such compositions can in theory be written using only a small number of
elementary operations, it is not always convenient to write them in such a way. We found that it is
often better to define ad-hoc operations on the fly. For instance, in section \ref{sec:twofunctions}
we need to define the composition of the following diagram, where $\nu$ and $\eta$ are squares
filling their respective part of the diagram and $vw^=$ is a 2-dimensional path between $vw$ and
$v\cdot w\inv$.
\[
\begin{tikzcd}[sdiag, row sep=huge]
  a \arrow[rr,"p"] \arrow[d,"p"'] & & b
  \arrow[d,"r"]\\
  b \arrow[rru,"\idpS"',near end] \arrow[rr,phantom,"\nu"] & & c\\
  d \arrow[u,"s"] \arrow[r,"v"'] \arrow[rr, bend right=50, "vw"'{name=VW}]
  & e \arrow[ru,"t"{name=T}] \arrow[to=VW, phantom, "vw^="] & f \arrow[l,"w"] \arrow[u,"u"'] \arrow[from=T, phantom, "\eta"]
\end{tikzcd}
\quad
\longmapsto
\quad
\begin{tikzcd}[sdiag, row sep=huge]
  a \arrow[rr,"p"] \arrow[dd,"p \cdot s\inv"'] \arrow[rrdd,phantom, "\text{result}"]& & b
  \arrow[dd,"r \cdot u\inv"]\\
  \\
  d \arrow[rr,"vw"'] & & f
\end{tikzcd}
\]
% \caption{Example of coherence operation}
% \label{diag:ex}
% \end{diagram}

The key is to notice that the diagram is “contractible”, and that it is possible to write the list
of the arguments in a particular order reflecting this contractibility. More precisely, the
arguments are introduced in pairs \agda{(x : X) (y : Y)} where \agda{Y} is either an identity type
with \agda{x} as exactly one of the endpoints, or a square type with \agda{x} as one of the sides
(and not appearing in the other sides). We can then repeatedly apply the \agda{J} rule (or a similar
rule for squares) until the list of arguments is exhausted, and we finally return the identity
square.

We implemented a mechanism making it relatively easy to define such coherence operations in Agda. A
coherence operation is defined by encapsulating its type in the \agda{Coh} type constructor, and is
defined using the \agda{path-induction} term. See fragment \ref{code:excoh} for an application of
this principle to our example.

\begin{figure}[htbp]
  \renewcommand{\figurename}{Code fragment}
  \centering
  \input{code/excoh}
  \caption{Exemple of coherence operation}
  \label{code:excoh}
\end{figure}

This is implemented in Agda using instance arguments (the equivalent of type classes in Coq or
Haskell), see fragment \ref{code:pathinduction} for a simplified implementation. The type
constructor \agda{Coh} is a dummy record type which is used to make the instance arguments machinery
work. We then define the Paulin--Mohring rule \agda{J}, acting on terms in \agda{Coh}, and the
identity path under \agda{Coh}. Both \agda{J} and \agda{idp-Coh} are declared under the
\agdakw{instance} keyword, which means that whenever Agda is looking for an element of type
\agda{Coh} during instance resolution, it will automatically (and recursively) try both \agda{J} and
\agda{idp-Coh}. The term \agda{path-induction} then tells Agda to use the instance resolution
mechanism to try to solve the goal. For instance, in the term \agda{composition}, Agda is looking
for something of type
\begin{center}
  \agda{Coh (\{b : A\} (p : a == b) \{c : A\} (q : b == c) → a == c)}
\end{center}
It turns out that \agda{J} fits assuming you have something of type
\begin{center}
  \agda{Coh (\{c : A\} (q : a == c) → a == c)}
\end{center}
Again, \agda{J} fits assuming you have something of type
\begin{center}
  \agda{Coh (a == a)}
\end{center}
And in this case, \agda{idp-Coh} fits, so we are done. Therefore, the term \agda{path-induction}
simply reduces to \agda{J (J idp-Coh)}. For more complicated coherence operations, there might be
several $J$-like operators to be used, for instance if the path is reversed, or if we’re dealing
with squares, or if the arguments are implicit, but the user only has to type \agda{path-induction}
and instance resolution will automatically find the sequence of $J$-like operators to apply. The
resulting coherence operation can be turned into an actual function using the \agda{\&} function, as
is shown in \agda{pq}.

This mechanism can also be used to do inductions on homotopies (point-wise equality between
functions) or on equivalences (using the univalence axiom), for instance, by adding the appropriate
$J$-like operators.

\begin{figure}[htbp]
  \renewcommand{\figurename}{Code fragment}
  \centering
  \input{code/pathinduction}
  \caption{The \agda{\scriptsize path-induction} mechanism}
  \label{code:pathinduction}
\end{figure}

Note that there is a strong similarity between coherence operations as described here and operations
in a Grothendieck $\infty$-groupoid, the main difference being that we allow squares and other
shapes, whereas in a Grothendieck $\infty$-groupoid everything is strictly globular. In particular,
all operations in a Grothendieck $\infty$-groupoid are coherence operations as described here.

\section{Definition of the types $(J_nA)$ and $\JiA$}\label{sec:JiA}

We can now start working on the James construction. In this section we will define the types
$(J_nA)$ and $\JiA$. The intuition is that if $JA$ is the free monoid on $A$, then $J_nA$ is the
“subset” of $JA$ consisting of elements of length at most $n$. But this is only an intuition, as
there is no notion of “subset” which would apply here, and there is no notion of length for the
elements of $JA$ either, so we need to give a new definition.

The types $(J_nA)$ are defined by induction on $n$, together with three functions
\begin{align*}
  i_n &: J_nA\to J_{n+1}A,\\
  \alpha_n &: A\times J_nA\to J_{n+1}A,\\
  \beta_n &: (x:J_nA)\to\alpha_n(\star_A,x)=_{J_{n+1}A}i_n(x),
\end{align*}
as follows.
\begin{itemize}
\item $J_0A$ is the unit type, whose unique element is called $\epsilon$,
\item $J_1A\defeq A$, $i_0(\epsilon)\defeq\star_A$, $\alpha_0(a,\epsilon)\defeq a$ and
  $\beta_0(\epsilon)\defeq\idp{\star_A}$,
\item $J_{n+2}A$, $i_{n+1}$, $\alpha_{n+1}$ and $\beta_{n+1}\defeq\push\circ\inr$ are
  defined by the pushout diagram
  \begin{equation}
    \begin{tikzcd}
      (A\times J_nA)\sqcup^{J_nA}J_{n+1}A \arrow[r,"g"] \arrow[d,"f"'] & J_{n+1}A
      \ar[d,"i_{n+1}",dashed]\\
      A\times J_{n+1}A \ar[r,"\alpha_{n+1}"',dashed] & J_{n+2}A \arrow[lu,phantom,"\ulcorner",at start]
    \end{tikzcd}\label{eq:jn+2a}
  \end{equation}
  where the pushout at the top-left of the diagram is defined by the maps $x\mapsto(\star_A,x)$ and
  $i_n$, and the maps $f$ and $g$ are defined by
  \begin{align*}
    f(\inl(a,x))&\defeq(a, i_n(x)), & g(\inl(a,x))&\defeq\alpha_n(a,x),\\
    f(\inr(y))&\defeq (\star_A, y),     & g(\inr(y))&\defeq y,\\
    \ap f(\push(x))&\defeq \idpS,  & \ap g(\push(x))&\defeq \beta_n(x).
  \end{align*}
\end{itemize}
We could also have started the definition with $J_{-1}A$ being the empty type, and then it would
follow that $J_1A$ is equivalent to $A$, but we’ve decided to start at $J_0A$ so that we don’t need
to introduce negative numbers. Moreover, the data of $i_n$ and $\beta_n$ forms a contractible type,
as $\beta_n$ asserts that $i_n$ is equal to something else. Therefore, we could define $J_{n+2}A$ as
a higher inductive type using only $\alpha_n$, by simply substituting $\alpha_n(\star_A,x)$ for
$i_n(x)$ wherever needed. We decided to introduce $i_n$ and $\beta_n$ because defining $J_{n+2}A$ as
a pushout will be very helpful in order to get the connectivity results of section \ref{sec:connin}.

The Agda definition of the $J_nA$, $i_n$, $\alpha_n$ and $\beta_n$ is given in fragment
\ref{code:defJiab}. It is a set of mutually recursive definitions, which is written in Agda by
placing the type signatures of all the functions before their definitions. We write \agda{J n} for
$J_nA$ (the type $A$ being a global argument), \agda{JS n} for $J_{n+1}A$ (we need to define it
separately in order to pass the termination checker), \agda{ι n x} for $i_n(x)$, \agda{α n a x} for
$\alpha_n(a, x)$ and \agda{β n x} for $\beta_n(x)$.

\begin{figure}[htbp]
  \renewcommand{\figurename}{Code fragment}
  \centering
  \input{code/defJiab}
  \caption{The definition of $J_nA$, $i_n$, $\alpha_n$ and $\beta_n$}
  \label{code:defJiab}
\end{figure}

Note that $J_{n+2}A$ is defined by giving $i_{n+1}$, $\alpha_{n+1}$, $\beta_{n+1}$, and the two
functions
\begin{align*}
  \gamma_n &: (a : A) (x : J_nA) \to \alpha_{n+1}(a,i_n(x))=_{J_{n+2}A}i_{n+1}(\alpha_n(a,x)),\\
  \gamma_n(a,x) &\defeq \push(\inl(a,x))
\end{align*}
and
\[
\eta_n : (x : J_nA) \to
\begin{tikzcd}[sdiag]
  \bullet \arrow[r,"\idp{}"] \arrow[d,"\gamma_n({\star_A,x})"'] & \bullet
  \arrow[d,"\beta_{n+1}(i_n(x))"]\\
  \bullet \arrow[r,"\ap{i_{n+1}}(\beta_n(x))"'] & \bullet
\end{tikzcd}
%\mathsf{Square}(\gamma_n(\star_A,x),\idp,\ap{i_{n+1}}(\beta_n(x)),\beta_{n+1}(i_n(x))),
\]
which is the naturality square of $\push$ on $\push(x)$. %, see fragment \ref{code:defge}.

% \fragment{defge}{The definition of $\gamma_n$ and $\eta_n$}

We could also have defined $J_{n+2}A$ directly as a higher inductive type with constructors
$i_{n+1}$, $\alpha_{n+1}$, $\beta_{n+1}$, $\gamma_n$ and $\eta_n$. But in section \ref{sec:connin}
we will use the fact that it is defined using pushouts, so instead we simply prove that $J_{n+2}A$
satisfies the elimination rule corresponding to those five constructors. This will be very useful
when defining functions out of $J_{n+2}A$. The code is shown in fragment \ref{code:JSSElim}. Note
that we need to use a dependent square over $\eta_n(x)$, given that $\eta_n(x)$ is a square. The
function \agda{↓-ap-in} turns a dependent path in $P \circ i_n$ over $\beta_n(x)$ into a dependent
path in $P$ over $\ap{i_n}(\beta_n(x))$, and the function \agda{↓-ap-in-coh} is a coherence related
to \agda{↓-ap-in}. The important thing to see is that we use twice the elimination rule for
pushouts, and that we put $\iota^*$, $\alpha^*$, $\beta^*$, $\gamma^*$ and $\eta^*$ in the five
branches, which is what we should expect.

\begin{figure}[htbp]
  \renewcommand{\figurename}{Code fragment}
  \centering
  \input{code/JSSElim}
  \caption{The elimination rule of $J_{n+2}A$}
  \label{code:JSSElim}
\end{figure}

% % \fragment{defge}{The definition of $\gamma_n$ and $\eta_n$}
% % The Agda definitions of $\gamma_n$ and $\eta_n$ are given in fragment \ref{code:defge}.
% % Note that, in order to define a map out of $J_{n+2}A$, it is enough to define it for elements of the
% % form $i_{n+1}(x)$ and $\alpha_{n+1}(a,x)$, for paths of the form $\beta_{n+1}(x)$ and
% % $\gamma_n(a,x)$, and for 2-paths of the form $\eta_n(x)$.

We now define $\JiA$ as the colimit of the family $(J_nA)_{n:\N}$ along the maps $(i_n)_{n:\N}$,
which means that $\JiA$ is the higher inductive type generated by the two constructors
\begin{align*}
  \inn&:(n:\N)\to J_nA\to \JiA,\\
  \push&:(n:\N)(x:J_nA)\to\inn_n(x)=_{\JiA}\inn_{n+1}(i_n(x)).
\end{align*}
The induction principle for $\JiA$ states that given a dependent type $P:\JiA\to\Type$, a
function $f:(x:\JiA)\to P(x)$ can be defined by
\begin{align*}
  f &: (x:\JiA)\to P(x),\\
  f(\inn_n(x)) &\defeq \inn^*_n(x),\\
  \apd{f}(\push_n(x)) &\defeq \push^*_n(x),
\end{align*}
where we have
\begin{align*}
  \inn^*_n&:(x:J_nA)\to P(\inn_n(x)),\\
  \push^*_n&:(x:J_nA)\to \inn^*_n(x) =^P_{\push_n(x)}\inn^*_{n+1}(i_n(x)).
\end{align*}
It is implemented is the same way as for pushouts and $JA$, and the corresponding code is shown in
fragment \ref{code:defJiA}. Note that we’re using the notations \agda{in∞ n x} for $\inn_n(x)$ and
\agda{push∞ n x} for $\push_n(x)$, because \agdakw{in} is a reserved keyword in Agda and \agda{push}
is already used for pushouts.

\begin{figure}[htbp]
  \renewcommand{\figurename}{Code fragment}
  \centering
  \input{code/defJiA}
  \caption{The definition of $\JiA$}
  \label{code:defJiA}
\end{figure}

% We have the diagram
% \[
% \begin{tikzcd}[column sep=large]
%   A_0 \arrow[r,"i_0"] \arrow[rd,"\inn_0"'] & A_1 \arrow[r,"i_1"] \arrow[d,"\inn_1"] & A_2
%   \arrow[r,"i_2"] \arrow[ld,"\inn_2"] & A_3 \arrow[r] \arrow[lld,bend left=15,"\inn_3" pos=0.4] &\dots\\
%   &A_\infty
% \end{tikzcd}
% \]

\section{The two functions}\label{sec:twofunctions}

We recall that $JA$ is the higher inductive type with constructors
\begin{align*}
  \epsilonJ &: JA, \\
  \alphaJ &: A\to JA\to JA, \\
  \deltaJ &: (x:JA)\to x=_{JA}\alphaJ(\star_A, x).
\end{align*}
In this section we define the two maps between $JA$ and $\JiA$. The idea is to mimic the structure
present in $JA$ in $\JiA$, and vice versa, so we first define equivalents of $\gamma_n$, $\eta_n$,
$\inn_n$ and $\push_n$ in $JA$, and then equivalents of $\epsilonJ$, $\alphaJ$, $\deltaJ$, and of
$\gammaJ$ and $\etaJ$ (defined below) in $\JiA$.

\paragraph{Structure on $JA$}

We define the map $\gammaJ$, where we simply apply $\deltaJ$ twice, by
\begin{align*}
  \gammaJ &: (a : A) (x : JA) \to \alphaJ(a,\alphaJ(\star_A,x)) =
                  \alphaJ(\star_A,\alphaJ(a, x)),\\
  \gammaJ(a,x) &\defeq (\ap{\alphaJ(a,-)}{(\deltaJ(x))})\inv\concat\deltaJ(\alphaJ(a,x)),
\end{align*}
and the map
\begin{align*}
  \etaJ &: (x : JA) \to \gammaJ(\star_A, x) = \idpS
\end{align*}
using naturality of $\deltaJ$ on $\deltaJ(x)$, see diagram \ref{diag:natdeltadelta}.

\begin{diagram}
  \begin{tikzcd}[sdiag,column sep=huge]
    x \arrow[d,"{\deltaJ(x)}"'] \arrow[r,"\deltaJ(x)"] &
    \alphaJ(\star_A,x) \arrow[d,"{\deltaJ(\alphaJ(\star_A,x))}"]\\
    \alphaJ(\star_A,x) \arrow[r,"\ap{\alphaJ(\star_A,-)}(\deltaJ(x))"'] &
    \alphaJ(\star_A,\alphaJ(\star_A, x))
  \end{tikzcd}
  
  \caption{Naturality square of $\deltaJ$ on $\deltaJ(x)$}
\label{diag:natdeltadelta}
\end{diagram}

The formalization of $\gammaJ$ and $\etaJ$ is shown in fragment \ref{code:defgJeJ}. We will define
$\gammai$ and $\etai$ in the same way, which is why we wrote it for a general type $X$ equipped with
functions $\alpha$ and $\delta$. The coherence operation shows that given a square where the left
and top sides are the same, then the inverse of the bottom side composed with the right side is
equal to the identity path, which is what we need when defining $\etaJ$.

\begin{figure}[htbp]
  \renewcommand{\figurename}{Code fragment}
  \centering
  \input{code/defgJeJ}
  \caption{The definition of $\gammaJ$ and $\etaJ$}
  \label{code:defgJeJ}
\end{figure}

We now define $(\innJ_n)$ and $(\pushJ_n)$ by
\begin{align*}
  \innJ_n &: J_nA \to JA, &  \pushJ_n &: (x : J_nA) \to \innJ_n(x) = \innJ_{n+1}(i_n(x)),\\
  \innJ_0(\epsilon) &\defeq \epsilonJ, &  \pushJ_n(x) &\defeq \deltaJ(\innJ_n(x)).\\
  \innJ_1(a) &\defeq \alphaJ(a,\epsilonJ),&&\\
  \innJ_{n+2}(i_{n+1}(x)) &\defeq \alphaJ(\star_A,\innJ_{n+1}(x)),&&\\
  \innJ_{n+2}(\alpha_{n+1}(a,x)) &\defeq \alphaJ(a,\innJ_{n+1}(x)),&&\\
  \ap{\innJ_{n+2}}{(\beta_{n+1}(x))} &\defeq \idpS,&&\\
  \ap{\innJ_{n+2}}{(\gamma_n(a,x))} &\defeq \gammaJ(a,\innJ_n(x)),&&\\
  \apii{\innJ_{n+2}}{(\eta_n(x))} &\defeq \etaJ(\innJ_n(x)),&&
\end{align*}
Note that for $\innJ_{n+2}$ we’re using the new (non-dependent) elimination rule for $J_{n+2}A$
mentioned earlier.  While this definition looks simple a priori, it doesn’t quite type-check. In
particular, the type of the term $\ap{\innJ_{n+2}}{(\gamma_n(a,x))}$ is
\[\innJ_{n+2}(\alpha_{n+1}(a,i_n(x)))=\innJ_{n+2}(i_{n+1}(\alpha_n(a,x))),\]
whereas the type of $\gammaJ(a,\innJ_n(x))$ is
\[\alphaJ(a,\alphaJ(\star_A,\innJ_n(x))) = \alphaJ(\star_A,\alphaJ(a,\innJ_n(x))).\]
Looking at the definitions above, the outer $\innJ_{n+2}$ reduce, but then we need the following
reduction rules:
\begin{align*}
  \innJ_{n+1}(i_n(x)) &= \alphaJ(\star_A,\innJ_n(x)),\\
  \innJ_{n+1}(\alpha_n(a, x)) &= \alphaJ(a,\innJ_n(x)).
\end{align*}
The idea is that we have defined $\innJ_{n+1}(i_n(x))$ separately for $0$ and $n+1$, and we need to
make sure that the two are compatible (and similarly for $\alpha_n(x)$).  It is easy to see that
those equalities both hold definitionally when $n$ is either \agda{0} or of the form \agda{S n}, but
that does \emph{not} imply that they hold definitionally for an arbitrary $n$.

Therefore, in the formalization we use propositional equalities \agda{inJS-ι} and \agda{inJS-α} that
we prove together with the rest, and in the definition of $\ap{\innJ_{n+2}}(\gamma_n(a,x))$ we need
to explicitly compose $\gammaJ(a,\innJ_n(x))$ with those equalities. For
$\ap{\innJ_{n+2}}(\eta_n(x))$ we also need a similar equality corresponding to $\beta_n(x)$. The
code is shown in fragment \ref{code:definJpushJ}.

\begin{figure}[htbp]
  \renewcommand{\figurename}{Code fragment}
  \centering
  \input{code/definJpushJ}
  \caption{The definition of $\innJ_n$ and $\pushJ_n$}
  \label{code:definJpushJ}
\end{figure}

\paragraph{Structure on $\JiA$}

The equivalent of $\epsilonJ$ is the term $\epsiloni\defeq\inn_0(\epsilon)$ of type $\JiA$. We
then define the action of $A$ on $\JiA$ as follows. In order to multiply by $a:A$ an element of the
form $\inn_n(x)$, we use $\alpha_n$, and then we use $\gamma_n$ to show that it is compatible with
$i_n$.
\begin{align*}
  \alphai &: A \to \JiA \to \JiA,\\
  \alphai(a,\inn_n(x)) &\defeq \inn_{n+1}(\alpha_n(a,x)),\\
  \ap{\alphai(a,-)}{(\push_n(x))} &\defeq \push_{n+1}(\alpha_n(a,x))\cdot\ap{\inn_{n+2}}{(\gamma_n(a,x))}\inv.
\end{align*}
The equivalent of $\deltaJ$ is $\deltai$ defined by
\begin{align*}
  \deltai &: (x : \JiA) \to x=\alphai(\star_A,x), \\
  \deltai(\inn_n(x)) &\defeq \push_n(x) \concat \ap{\inn_{n+1}}(\beta_n(x))\inv, \\
  \apd\deltai(\push_n(x)) &\defeq \deltai^{\push_n}(x),
\end{align*}
where $\deltai^{\push_n}(x)$ is the composition of diagram \ref{diag:deltaipush}, where the lower
right triangle is filled using $\eta_n(x)$ and the pentagon in the middle is filled using the
naturality square of $\push_{n+1}$ on $\beta_n(x)$. The corresponding code is shown in fragment
\ref{code:defstructinf}.

\begin{figure}[htbp]
  \renewcommand{\figurename}{Code fragment}
  \centering
  \input{code/defstructinf}
  \caption{The structure on $\JiA$}
  \label{code:defstructinf}
\end{figure}

\begin{diagram}
\begin{tikzcd}[sdiag,sep=6.5em]
  \bullet \arrow[rr,"\push_n(x)"] \arrow[d,"\push_n(x)"'] & & \bullet
  \arrow[d,"\push_{n+1}(i_n(x))"]\\
  \bullet \arrow[rru,"\idpS"'] & & \bullet\\
  \bullet \arrow[u,"\ap{\inn_{n+1}}(\beta_n(x))"] \arrow[r,"{\push_{n+1}(\alpha_n(\star_A,x))}"']
  & \bullet \arrow[ru,"\ap{\inn_{n+2}}(\ap{i_{n+1}}(\beta_n(x)))"] & \bullet \arrow[l,"{\ap{\inn_{n+2}}(\gamma_n(\star_A,x))}"] \arrow[u,"\ap{\inn_{n+2}}(\beta_{n+1}(i_n(x)))"']
\end{tikzcd}
\caption{The square defining $\apd\deltai(\push_n(x))$}
\label{diag:deltaipush}
\end{diagram}

We finally define 
\begin{align*}
  \gammai &: (a : A) (x : \JiA) \to \alphai(a,\alphai(\star_A,x)) =
                  \alphai(\star_A,\alphai(a, x)),\\
  \etai &: (x : \JiA) \to \gammai(\star_A, x)=\idpS
\end{align*}
in the same way as we defined $\gammaJ$ and $\etaJ$, but using $\alphai$ and $\deltai$ instead of
$\alphaJ$ and $\deltaJ$.

In the case of $\gammai(a,\inn_n(x))$ we note that
\begin{align*}
  \gammai(a,\inn_n(x))
  &= \ap{\alphai(a,-)}(\deltai(\inn_n(x)))\inv\concat\deltai(\alphai(a,\inn_n(x)))\\
  &= \ap{\alphai(a,-)}(\push_n(x)\concat
    \ap{\inn_{n+1}}(\beta_n(x))\inv)\inv\concat\deltai(\inn_{n+1}(\alpha_n(a,x)))\\
  &= (\push_{n+1}(\alpha_n(a,x))\concat\ap{\inn_{n+2}}(\gamma_n(a,x))\inv\\
  &\qquad\concat\ap{\inn_{n+2}}(\ap{\alpha_{n+1}(a,-)}(\beta_n(x)))\inv)\inv\\
  &\quad\concat(\push_{n+1}(\alpha_n(a,x))\concat\ap{\inn_{n+2}}(\beta_{n+1}(\alpha_n(a,x)))\inv)\\
  &= \ap{\inn_{n+2}}(\ap{\alpha_{n+1}(a,-)}(\beta_n(x)))\concat\ap{\inn_{n+2}}(\gamma_n(a,x))\\
  &\qquad\concat\ap{\inn_{n+2}}(\beta_{n+1}(\alpha_n(a,x)))\inv.
\end{align*}
Therefore $\gammai(a,\inn_n(x))$ fits in the square
\begin{equation}
\begin{tikzcd}[sdiag,sep=huge]
  \bullet \arrow[r,"{\ap{\inn_{n+2}}(\ap{\alpha_{n+1}(a,-)}(\beta_n(x)))}"]
  \arrow[d,"{\gammai(a,\inn_n(x))}"'] & \bullet \arrow[d,"{\ap{\inn_{n+2}}(\gamma_n(a,x))}"]\\
  \bullet \arrow[r,"{\ap{\inn_{n+2}}(\beta_{n+1}(\alpha_n(a,x)))}"'] & \bullet
\end{tikzcd}\label{eq:gammai}
\end{equation}
which we can see as a sort of reduction rule for $\gammai(a,\inn_n(x))$. In the formalization, we
simply define a coherence operation combining all the ingredients of the equality reasoning above,
see fragment \ref{code:gammainfin}.

\begin{figure}[htbp]
  \renewcommand{\figurename}{Code fragment}
  \centering
  \input{code/gammainfin}
  \caption{The reduction rule for $\gammai(a,\inn_n(x))$}
  \label{code:gammainfin}
\end{figure}

There is a similar reduction rule for $\etai(\inn_n(x))$. The term $\etai(\inn_n(x))$ is defined
using $\apd{\deltai}(\deltai(\inn_n(x)))$ and we have
\begin{align*}
  \apd{\deltai}(\deltai(\inn_n(x))) &= \apd{\deltai}(\push_n(x)\concat
                                     \ap{\inn_{n+1}}(\beta_n(x)\inv))\\
                                   &= \deltai^{\push_n}(x)\concat\apd{\lambda x.\push_{n+1}(x)\concat\ap{\inn_{n+2}}(\beta_{n+1}(x)\inv)}(\beta_n(x)\inv).
\end{align*}
The $\apd{\push_{n+1}}(\beta_n(x)\inv)$ part cancels with the naturality square of $\push_{n+1}$ on
$\beta_n(x)$ used in $\deltai^{\push_n}(x)$ and the remaining part
$\apd{\ap{\inn_{n+2}}(\beta_{n+1}(-)\inv)}(\beta_n(x)\inv)$ is the naturality square of $\beta_{n+1}$
on $\beta_n(x)$. Therefore $\etai(\inn_n(x))$ fits in the three-dimensional diagram
\begin{equation}
\begin{tikzcd}[sdiag,sep=6em,row sep=normal]
  \bullet \arrow[r,"{\ap{\inn_{n+2}}(\ap{\alpha_{n+1}(\star_A,-)}(\beta_n(x)))}"]
  \arrow[dd,"{\gammai(\star_A,\inn_n(x))}"' description] \arrow[dd,bend right=90,looseness=2,"\idpS"']& \bullet
  \arrow[dd,"{\ap{\inn_{n+2}}(\gamma_n(\star_A,x))}" description] \arrow[rrd,"{\ap{\inn_{n+2}}(\beta_{n+1}(i_n(x)))}"]\\
  & & & \bullet\\
  \bullet \arrow[r,"{\ap{\inn_{n+2}}(\beta_{n+1}(\alpha_n(\star_A,x)))}"'] & \bullet \arrow[rru,"\ap{\inn_{n+2}}(\ap{i_{n+1}}(\beta_n(x)))"']
\end{tikzcd}
\label{diagetai}
\end{equation}
where the half-disc on the left is $\etai(\inn_n(x))$, the square in the middle is square
(\ref{eq:gammai}), the triangle on the right is the application of $\inn_{n+2}$ to $\eta_n(x)$ and the
outer diagram is the application of $\inn_{n+2}$ to the naturality square of $\beta_{n+1}$ on
$\beta_n(x)$, which is
\[
\begin{tikzcd}[sdiag,sep=huge]
  \bullet \arrow[r,"{\ap{\alpha_{n+1}(\star_A,-)}(\beta_n(x))}"]
  \arrow[d,"{\beta_{n+1}(\alpha_n(\star_A,x))}"'] & \bullet \arrow[d,"{\beta_{n+1}(i_n(x))}"]\\
  \bullet \arrow[r,"{\ap{i_{n+1}}(\beta_n(x))}"'] & \bullet
\end{tikzcd}
\]
As we see it as a reduction rule for $\etai(\inn_n(x))$, in the formalization it is helpful to see
it as a cube, where the left face is $\etai(\inn_n(x))$, the right face is $\ap{\inn_n}(\eta_n(x))$,
and the other faces are what is needed to make the sides of the left and right face coincide.  As
before, it is defined as a coherence operation combining all the ingredients described above.

% We show a part of the code in fragment \ref{code:etainfin}, but we
% will not explain it in more detail.

% \fragment{etainfin}{The reduction rule for $\etai(\inn_n(x))$ (incomplete)}

\paragraph{The two maps}

We can now define the maps back and forth by
\begin{align*}
  \tof &: \JiA \to JA, & \from &: JA \to \JiA,\\
  \tof(\inn_n(x)) &\defeq \innJ_n(x), & \from(\epsilonJ) &\defeq \epsiloni,\\
  \ap{\tof}{(\push_n(x))} &\defeq \pushJ_n(x), & \from(\alphaJ(a,x)) &\defeq \alphai(a,\from(x)),\\
  & & \ap{\from}(\deltaJ(x)) &\defeq \deltai(\from(x)).
\end{align*}
The code, given in fragment \ref{code:twomaps}, is straightforward.

\begin{figure}[htbp]
  \renewcommand{\figurename}{Code fragment}
  \centering
  \input{code/twomaps}
  \caption{The two maps}
  \label{code:twomaps}
\end{figure}

\section{The two composites}\label{sec:twocomposites}

We now prove that the two maps \agda{to} and \agda{from} are inverse to each other. We will stop
giving code fragments, as they would become too long, but we remind the reader that the full code is
available at \urlgithub.

\paragraph{First composite}

Let’s first prove that $\from(\tof(z)) = z$ for all $z:\JiA$.

By induction on $z$, it is enough to show that for every $n:\N$ and $x:J_nA$, we have
$\from(\innJ_n(x))=\inn_n(x)$ and $\ap\from(\pushJ_n(x))=\push_n(x)$ (in the appropriate dependent
path type). Let’s first do the case of $\innJ_n(x)$ by induction on $n$, and then by induction on
$x$, using the dependent elimination rule for $J_{n+2}A$.
\begin{itemize}
\item For $0$ and $\epsilon$, we have
  \begin{align*}
    \from(\innJ_0(\epsilon)) &= \from(\epsilonJ)\\
                             &= \epsiloni\\
    &= \inn_0(\epsilon).
  \end{align*}
\item For $1$ and $a:A$, we have
  \begin{align*}
    \from(\innJ_1(a)) &= \from(\alphaJ(a,\epsilonJ))\\
                      &= \alphai(a,\from(\epsilonJ))\\
                      &= \inn_1(a).
  \end{align*}
\item For $n+2$ and $i_{n+1}(x)$, we have
  \begin{align*}
    \from(\innJ_{n+2}(i_{n+1}(x))) &= \from(\alphaJ(\star_A,\innJ_{n+1}(x)))\\
                                   &= \alphai(\star_A,\from(\innJ_{n+1}(x)))\\
                                   &= \alphai(\star_A,\inn_{n+1}(x))\quad\text{ by induction hypothesis}\\
                                   &= \inn_{n+2}(\alpha_{n+1}(\star_A,x))\\
                                   &= \inn_{n+2}(i_{n+1}(x))\quad \text{ using }\beta_{n+1}(x).
  \end{align*}
\item For $n+2$ and $\alpha_{n+1}(a,x)$, we have
  \begin{align*}
    \from(\innJ_{n+2}(\alpha_{n+1}(a,x))) &= \from(\alphaJ(a,\innJ_{n+1}(x)))\\
                                   &= \alphai(a,\from(\innJ_{n+1}(x)))\\
                                   &= \alphai(a,\inn_{n+1}(x))\quad\text{ by induction hypothesis}\\
                                   &= \inn_{n+2}(\alpha_{n+1}(a,x)).
  \end{align*}
\item For $n+2$ and $\beta_{n+1}(x)$, we have
  \begin{align*}
    \ap\from(\ap{\innJ_{n+2}}(\beta_{n+1}(x))) &= \ap\from(\idp{\alphaJ(\star_A,\innJ_{n+1}(x))})\\
                                               &= \idp{\from(\alphaJ(\star_A,\innJ_{n+1}(x)))}\\
                                               &= \idp{\alphai(\star_A,\from(\innJ_{n+1}(x)))}%\\
                                               % &= \idp{\alphai(\star_A,\inn_{n+1}(x))}\quad\text{
                                               %   by induction hypothesis}\\
                                               % &= \idp{\inn_{n+2}(\alpha_{n+1}(\star_A,x))},
  \end{align*}
  hence it follows from the fact that the diagram
  \[
  \begin{tikzcd}[sdiag,column sep=8em]
    \bullet \arrow[r,"p"] \arrow[d,"{\idp{\alphai(\star_A,\from(\innJ_{n+1}(x)))}}"'] &
    \bullet \arrow[r,"{\idp{\inn_{n+2}(\alpha_{n+1}(\star_A,x))}}"]
    % \arrow{d}[description]{\idp{\inn_{n+2}(\alpha_{n+1}(\star_A,x))}} &
    \arrow[d,bend right,"{\idp{\alphai(\star_A,\inn_{n+1}(x))}}"']
    \arrow[d,bend left,"{\idp{\inn_{n+2}(\alpha_{n+1}(\star_A,x))}}"] &
    \bullet \arrow[d,"\ap{\inn_{n+2}}(\beta_{n+1}(x))"] \\
    \bullet \arrow[r,"p"'] &
    \bullet \arrow[r,"\ap{\inn_{n+2}}(\beta_{n+1}(x))"'] & \bullet
  \end{tikzcd}
  \]
  can be filled. Here the path
  $p:\alphai(\star_A,\from(\innJ_{n+1}(x)))=\alphai(\star_A,\inn_{n+1}(x))$ is the function
  $\alphai(\star_A,-)$ applied to the induction hypothesis, the two curved paths in the middle are
  definitionally equal, and the right square is a connection. The top and the bottom side are the
  equalities between $\from(\innJ_{n+2}(x))$ and $\inn_{n+2}(x)$ constructed above for
  $x\defeq\alpha_{n+1}(\star_A,x)$ and $x\defeq i_{n+1}(x)$, which is what we want.
\item For $n+2$ and $\gamma_n(a,x)$, we need to give a square
  \[
  \begin{tikzcd}[sdiag]
    \bullet \arrow[r] \arrow[d,"\ap\from(\ap{\innJ_{n+2}}(\gamma_n({a,x})))"'] & \bullet \arrow[d,"\ap{\inn_{n+2}}(\gamma_n({a,x}))"] \\
    \bullet \arrow[r] & \bullet
  \end{tikzcd}
  \]
  where the top and bottom lines are the two equalities
  \begin{align*}
    \from(\innJ_{n+2}(\alpha_{n+1}(a,i_n(x)))) &= \inn_{n+2}(\alpha_{n+1}(a,i_n(x)))\\
    \text{and}\quad\from(\innJ_{n+2}(i_{n+1}(\alpha_n(a,x)))) &= \inn_{n+2}(i_{n+1}(\alpha_n(a,x)))
  \end{align*}
  which have been constructed in the cases above. The idea is to consider the following sequence of
  equalities
  \begin{alignat*}{2}
    \ap\from(\ap{\innJ_{n+2}}(\gamma_n(a,x))) &= \ap\from(\gammaJ(a,\innJ_n(x)))&&\quad\text{ by
      definition of $\innJ$,}\\
    &= \gammai(a,\from(\innJ_n(x)))&&\quad\text{ by definition of $\from$,}\\
    &= \gammai(a,\inn_n(x))&&\quad\text{ by induction hypothesis,}\\
    &= \ap{\inn_{n+2}}(\gamma_n(a,x))&&\quad\text{ by diagram \ref{eq:gammai}.}
  \end{alignat*}
  The first, third and fourth of those equalities are actually squares, so the above equational
  reasoning means that we consider a composition of squares as follows:
  \[
  \begin{tikzcd}[sdiag]
    \bullet \arrow[r] \arrow[d,"\ap\from(\ap{\innJ_{n+2}}(\gamma_n({a,x})))"'] & \bullet \arrow[d, bend right] \arrow[d, bend left] \arrow[r]
    & \bullet \arrow[d] \arrow[r] & \bullet \arrow[d,"\ap{\inn_{n+2}}(\gamma_n({a,x}))"] \\
    \bullet \arrow[r] & \bullet \arrow[r] & \bullet \arrow[r] & \bullet
  \end{tikzcd}
  \]

  However, it turns out that the top and the bottom line of that composition of squares are
  \emph{not} definitionally equal to what we want. For instance the top lines both go from
  $\from(\innJ_{n+2}(\alpha_{n+1}(a,i_n(x))))$ to $\inn_{n+2}(\alpha_{n+1}(a,i_n(x)))$, but in two
  different ways, and we have to prove that they are equal. This isn’t complicated, but it needs to
  be done, and it’s not a priori obvious to see when just looking at the equational reasoning
  above.
  
\item For $n+2$ and $\eta_n(x)$, it is similar to the case of $\gamma_n$. The core of the argument
  is the sequence of equalities
  \begin{alignat*}{2}
    \ap\from^2(\ap{\innJ_{n+2}}^2(\eta_n(x))) &= \ap\from^2(\etaJ(\innJ_n(x)))&&\quad\text{ by
      definition of $\innJ$,} \\
    &= \etai(\from(\innJ_n(x)))&&\quad\text{ by definition of $\from$,} \\
    &= \etai(\inn_n(x))&&\quad\text{ by induction
      hypothesis,}\\
    &= \ap{\inn_{n+2}}^2(\eta_n(x))&&\quad\text{ by diagram \ref{diagetai},}
  \end{alignat*}
  but the terms involved are squares which do not always have the same sides, therefore in the
  formalization we need to consider a composition of cubes, and then as above we need to prove that
  all four faces are equal to the ones required by the elimination rule of $J_{n+2}A$, which isn’t a
  priori true.
\end{itemize}

We finally have to show that for every $n:\N$ and $x:J_nA$, we have an equality between
$\ap\from(\pushJ_n(x))$ and $\push_n(x)$ along the equalities \[\from(\innJ_n(x))=\inn_n(x)\] and
\[\from(\innJ_{n+1}(i_n(x)))=\inn_{n+1}(i_n(x))\] that we have just constructed. We have
\begin{align*}
  \ap\from(\pushJ_n(x)) &= \deltai(\from(\innJ_n(x))) \\
                        &= \deltai(\inn_n(x))\quad\text{ by induction hypothesis}\\
                        &= \push_n(x)\concat\ap{\inn_{n+1}}(\beta_n(x))\inv,
\end{align*}
hence we have a filler of the square
\[
\begin{tikzcd}[sdiag,sep=huge]
  \bullet \arrow[rr,"p"] \arrow[d,"\ap\from(\pushJ_n(x))"'] & & \bullet \arrow[d,"\push_n(x)"]\\
  \bullet \arrow[r,"\ap{\alphai(\star_A,-)}(p)"'] & \bullet \arrow[r,"\ap{\inn_{n+1}}(\beta_n(x))"'] & \bullet
\end{tikzcd}
\]
where $p$ is the equality $\from(\innJ_n(x))=\inn_n(x)$.

\paragraph{Second composite}

Let’s now prove that $\tof(\from(z)) = z$, for all $z:JA$.

The idea is very similar, we proceed by induction on $z$ and we have to prove that
$\tof(\epsiloni)=\epsilonJ$ (which is true by definition), that for all $a:A$ and $x:\JiA$ we have
$\tof(\alphai(a,x))=\alphaJ(a,\tof(x))$, and that for all $x:\JiA$ we have
$\ap\tof(\deltai(x))=\deltaJ(\tof(x))$ along the appropriate equalities. Let’s first do the case of
$\alphai$ by induction on $x$. There are two cases.
\begin{itemize}
\item For an element of the form $\inn_n(x)$, we have
  \begin{align*}
    \tof(\alphai(a,\inn_n(x))) &= \tof(\inn_{n+1}(\alpha_n(a,x)))\\
                               &= \alphaJ(a,\innJ_n(x))\\
                               &= \alphaJ(a,\tof(\inn_n(x))),
  \end{align*}
  which is what we wanted.
\item For a path of the form $\push_n(x)$, we have
  \begin{align*}
    \ap\tof(\ap{\alphai(a,-)}(\push_n(x))) &= \ap\tof(\push_{n+1}(\alpha_n(a,x))\concat
                                              \ap{\inn_{n+2}}(\gamma_n(a,x))\inv)\\
                                            &= \deltaJ(\innJ_{n+1}(\alpha_n(a,x)))
                                              \concat\gammaJ(a,\innJ_n(x))\inv\\
                                            &= \deltaJ(\alphaJ(a,\innJ_n(x)))
                                              \concat\gammaJ(a,\innJ_n(x))\inv\\
                                            &= \ap{\alphaJ(a,-)}(\deltaJ(\innJ_n(x)))\quad\text{ by
                                              definition of $\gammaJ$}\\
                                            &= \ap{\alphaJ(a,-)}(\pushJ_n(x))\\
                                            &= \ap{\alphaJ(a,-)}(\ap\tof(\push_n(x))),
  \end{align*}
  which is again what we wanted.
\end{itemize}

We now prove that the path $\ap\tof(\deltai(x)):\tof(x)=\tof(\alphai(\star_A,x))$ composed with the
path from $\tof(\alphai(\star_A,x))$ to $\alphaJ(\star_A,\tof(x))$ that we have just constructed is
equal to the path $\deltaJ(\tof(x)):\tof(x)=\alphaJ(\star_A,\tof(x))$.
\begin{itemize}
\item For an element of the form $\inn_n(x)$, we have
  \begin{align*}
    \ap\tof(\deltai(\inn_n(x))) &= \ap\tof(\push_n(x)\concat\ap{\inn_{n+1}}(\beta_n(x))\inv)\\
                                &= \pushJ_n(x)\\
                                &= \deltaJ(\innJ_n(x))\\
                                &= \deltaJ(\tof(\inn_n(x))),
  \end{align*}
  which proves the result, as the path from $\tof(\alphai(\star_A,\inn_n(x)))$ to
  $\alphaJ(\star_A,\tof(\inn_n(x)))$ is the constant path.
\item For a path of the form $\push_n(x)$, we have to compare the terms
  $\ap\tof^2(\apd\deltai(\push_n(x)))$ and $\apd\deltaJ(\ap\tof(\push_n(x)))$. For the first one, we
  just apply the function $\tof$ to diagram \ref{diag:deltaipush}. We obtain
  \[
  \begin{tikzcd}[sdiag,sep=5.4em]
    \bullet \arrow[rr,"\pushJ_n(x)"] \arrow[d,"\pushJ_n(x)"'] & & \bullet
    \arrow[d,"{\deltaJ(\alphaJ(\star_A,\innJ_n(x)))}"]\\
    \bullet \arrow[rru,"\idpS"'] & & \bullet\\
    \bullet \arrow[u,"\idpS"] \arrow[r,"{\deltaJ(\alphaJ(\star_A,\innJ_n(x)))}"']
    & \bullet \arrow[ru,"\idpS"] & \bullet \arrow[l,"{\gammaJ(\star_A,\innJ_n(x))}"] \arrow[u,"\idpS"']
  \end{tikzcd}
  \]
  where the bottom right triangle is filled using $\etaJ(\innJ_n(x))$ and the rest is degenerate.
  On the other hand, we have
  \begin{align*}
    \apd\deltaJ(\ap\tof(\push_n(x))) &= \apd\deltaJ(\pushJ_n(x))\\
    &= \apd\deltaJ(\deltaJ(\innJ_n(x)))
  \end{align*}
  and $\etaJ(\innJ_n(x))$ is defined from $\apd\deltaJ(\deltaJ(\innJ_n(x)))$ by a coherence
  operation. Therefore some coherence operation proves that the two terms
  $\ap\tof^2(\apd\deltai(\push_n(x)))$ and $\apd\deltaJ(\ap\tof(\push_n(x)))$ are equal.
\end{itemize}

This concludes the proof that $\JiA$ is equivalent to $JA$.

\section{Equivalence between $JA$ and $\Omega\Susp A$}\label{sec:equivJA}

We now prove that when $A$ is $0$-connected, the type $JA$ is equivalent to $\Omega\Susp A$. We
recall that $\Omega X$ is defined to be $(\star_X=\star_X)$, where $X$ is a type pointed by
$\star_X:X$, and that given a type $A$, the type $\Susp A$ is the higher inductive type generated by
the constructors
\begin{align*}
  \north&:\Susp A,\\
  \south&:\Susp A,\\
  \merid&:A\to \north=_{\Susp A}\south,
\end{align*}
and pointed by $\north$.  The function $\deltaJ$ shows that the map $\alphaJ(\star_A, -)$ is homotopic
to the identity function, hence $\alphaJ(\star_A,-)$ is an equivalence. Given that $A$ is
$0$-connected, it follows that $\alphaJ(a,-)$ is an equivalence for every $a:A$. We define $F:\Susp
A\to\Type$ by
\begin{align*}
  F(\north) &\defeq JA,\\
  F(\south) &\defeq JA,\\
  \ap F(\merid(a)) &\defeq \mathsf{ua}(\alphaJ(a,-)),
\end{align*}
where, at the last line, the function $\mathsf{ua}$ produces a path in the universe given an
equivalence, using the univalence
axiom.% we see $\alphaJ(a,-)$ as a path from $JA$ to $JA$ in the universe using the
% univalence axiom.

We now prove that the total space of $F$ is contractible. According to the flattening lemma (see
\cite[section 6.12]{hottbook}), the total space of $F$ is equivalent to the type
\[T\defeq JA\sqcup^{A\times JA}JA,\]
where the two maps $A\times JA\to JA$ are $\snd$ and $\alphaJ$ respectively. In particular, given
$a:A$ and $x:JA$, we have
\[\push(a,x) : \inl(x) = \inr(\alphaJ(a,x)).\]
We want to construct, for every $x:T$, a path $p(x)$ from $x$ to $\inl(\epsilonJ)$.
\begin{itemize}
\item For $\inl(\epsilonJ)$, we take the constant path $\idp{\inl(\epsilonJ)}$
\item For an element of the form $\inl(\alphaJ(a,x))$, we take the composition
  \[
  \begin{tikzcd}[sdiag]
    \inl(\alphaJ(a,x)) \arrow[rrr,"{\push(\star_A,\alphaJ(a,x))}"] &&& \inr(\alphaJ(\star_A,\alphaJ(a,x)))
    \\
    \inl(x) \arrow[d,"p(\inl(x))"'] \arrow[rrr,"{\push(a,x)}"] &&&\inr(\alphaJ(a,x)) \arrow[u,"{\ap\inr(\deltaJ(\alphaJ(a,x)))}"']\\
    \inl(\epsilonJ).
  \end{tikzcd}
  \]
\item For a path of the form $\ap\inl(\deltaJ(x))$, we need to fill the diagram
  \[
  \begin{tikzcd}[sdiag]
    \inl(\alphaJ(\star_A,x)) \arrow[rrr,"{\push(\star_A,\alphaJ(\star_A,x))}"]
     &&& \inr(\alphaJ(\star_A,\alphaJ(\star_A,x))) \\
    \inl(x) \arrow[u,"\ap\inl(\deltaJ(x))"] \arrow[rrr,"{\push(\star_A,x)}"'] &&& \inr(\alphaJ(\star_A,x))
    \arrow[u,"{\ap\inr(\deltaJ(\alphaJ(\star_A,x)))}"']
  \end{tikzcd}
  \]
  By naturality of $\push(\star_A,-)$ on the path $\deltaJ(x)$, we get a filler of the similar
  diagram which has $\ap\inr(\ap{\alphaJ(\star_A,-)}(\deltaJ(x)))$ on the right side. Moreover, we
  know that the paths $\ap\inr(\ap{\alphaJ(\star_A,-)}(\deltaJ(x)))$ and
  $\ap\inr(\deltaJ(\alphaJ(\star_A,x)))$ are equal via $\etaJ(x)$, which concludes.
\item For a point of the form $\inr(x)$, we take the composition
  \[
  \begin{tikzcd}[sdiag]
    \inr(x) \arrow[r,"\ap\inr(\deltaJ(x))"]& \inr(\alphaJ(\star_A,x))&&
    \inl(x) \arrow[rr,"p(\inl(x))"] \arrow[ll,"{\push(\star_A,x)}"'] &&\inl(\epsilonJ).
  \end{tikzcd}
  \]
\item Finally for a path of the form $\push(a,x)$, it is enough to notice that the path from
  $\inr(\alphaJ(\star_A,x))$ to $\inl(\epsilonJ)$ constructed above (i.e. with
  $x:=\alphaJ(\star_A,x)$) is equal to the composition
  \[
  \begin{tikzcd}[sdiag]
    \inr(\alphaJ(a,x)) && \inl(x) \arrow[r,"p(\inl(x))"] \arrow[ll,"{\push(a,x)}"'] &\inl(\epsilonJ).
  \end{tikzcd}
  \]
\end{itemize}

This concludes the proof that $T$ is contractible, and therefore $\Omega\Susp A$ is equivalent to
the fiber $F(\north)$ of $F$ (it follows from Theorem 4.7.7 and Corollary 8.1.13 of
\cite{hottbook} applied to $F$), which is equal to $JA$ by definition.

\section{Connectivity of the maps $i_n$ and $\inn_n$}\label{sec:connin}

In this section we compute the connectivity of the maps $\inn_n:J_nA\to \JiA$. It quantifies how
“close” $J_nA$ is to $\JiA$, so it will be useful to study the first few homotopy groups of $\JiA$
by studying those of $J_nA$ instead.

We recall that a type $X$ is said to be \emph{$n$-connected} if its $n$-truncation is contractible,
and a map $f:X\to Y$ is said to be \emph{$n$-connected} if all of its (homotopy) fibers are
$n$-connected. We also recall the two following propositions.

\begin{proposition}\label{inductionconnected}[cf \cite[lemma 7.5.7]{hottbook}]
  For $f:A\to B$ and $P:B\to \Type$, consider the map
  \[\lambda s.s\circ f:\prod_{b:B}P(b) \to \prod_{a:A}P(f(a)).\]

  Then $f$ is $n$-connected if and only if for every family of $n$-types $P$, the map $(\lambda
  s.s\circ f)$ has a section.
  % Then the following are equivalent:
  % \begin{itemize}
  % \item $f$ is $n$-connected,
  % \item for every family of $n$-types $P$, the map $(\lambda s.s\circ f)$ is an equivalence,
  % \item for every family of $n$-types $P$, the map $(\lambda s.s\circ f)$ has a section.
  % \end{itemize}
\end{proposition}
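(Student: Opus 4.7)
The plan is to split the biconditional into two directions, after first rephrasing: a section of $\lambda s.\,s \circ f$ amounts to, for every $g : \prod_{a:A} P(f(a))$, an extension $\tilde g : \prod_{b:B} P(b)$ together with a homotopy witnessing $\tilde g(f(a)) = g(a)$. So the claim becomes that $f$ is $n$-connected exactly when every such extension problem with $n$-type-valued $P$ admits a solution.

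For the forward implication, assume each $\mathrm{fib}_f(b)$ is $n$-connected and fix $g$. For $b : B$, the assignment $(a,p) \mapsto \transport_P(p)(g(a))$ defines a map $\mathrm{fib}_f(b) \to P(b)$; since $P(b)$ is an $n$-type, this factors through $\|\mathrm{fib}_f(b)\|_n$, and since the latter is contractible, evaluating the factored map at its center gives a canonical $\tilde g(b) : P(b)$. When $b = f(a)$, the center of $\|\mathrm{fib}_f(f(a))\|_n$ is $|(a,\idpS)|_n$, and the construction then yields $\tilde g(f(a)) = g(a)$, providing the section.

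For the converse, assume the section property for every $n$-type family and fix $b : B$; the goal is to show $\|\mathrm{fib}_f(b)\|_n$ is contractible. Applying the hypothesis to the family $P(b') \defeq \|\mathrm{fib}_f(b')\|_n$ with $g(a) \defeq |(a,\idpS)|_n$ yields $\tilde g(b)$, which is our candidate center. For the contraction, I apply the section property a second time to $Q(b') \defeq \prod_{x : \|\mathrm{fib}_f(b')\|_n} (x = \tilde g(b'))$, which is again an $n$-type family since identity types and dependent products preserve being an $n$-type. Constructing an element of $Q(f(a))$ for each $a$ reduces, using the homotopy supplied by the first application of the section, to proving an identity in $\|\mathrm{fib}_f(f(a))\|_n$, and this can be discharged by truncation induction together with the equation $\tilde g(f(a)) = |(a,\idpS)|_n$.

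The main obstacle is the converse direction: from mere existence of an extension we must synthesize a full contraction, which forces the second application of the section to a carefully chosen $n$-type family together with a compatibility check between the two homotopies produced. Once that bookkeeping is arranged, the remaining work is routine manipulation of the universal property of the $n$-truncation and of transport along identifications.
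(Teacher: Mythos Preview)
The paper does not give its own proof of this proposition; it simply records the statement and cites Lemma~7.5.7 of \cite{hottbook}. Your argument follows essentially the HoTT book's proof and is correct in outline.

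One simplification is available in the converse direction. After the first application of the section hypothesis yields $\tilde g$ together with the homotopy $H_a : \tilde g(f(a)) = |(a,\idpS)|_n$, you can show \emph{directly} that $\tilde g(b)$ is a center of contraction for every $b:B$: the goal $\prod_{x:\|\mathrm{fib}_f(b)\|_n} x = \tilde g(b)$ is a family of $(n-1)$-types, so by the induction principle of the $n$-truncation it suffices to give, for each $(a',p):\mathrm{fib}_f(b)$, a path $|(a',p)|_n = \tilde g(b)$; and path induction on $p$ (generalizing $b$) reduces this to $|(a',\idpS)|_n = \tilde g(f(a'))$, which is $H_{a'}^{-1}$. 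This is exactly the HoTT book's argument, and it makes your second invocation of the section hypothesis unnecessary. Indeed, the same path-induction step is precisely what you would need to build the input $q(a)\in Q(f(a))$ to that second invocation, so nothing is saved by the detour.

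Relatedly, your final ``discharge'' step is phrased slightly misleadingly: the equation you actually need is $H_{a'}$ for the \emph{varying} point $a'$ coming out of the truncation induction (combined with path induction on, or transport along, $p$), not merely $H_a$ at the outer fixed $a$. With that correction the argument goes through.
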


\begin{proposition}\label{inductionconnectedtruncated}[cf \cite[lemma 8.6.1]{hottbook}]
  If $f:A \to B$ is $n$-connected and $P:B\to\Type$ is a family of $(n+k)$-types, then the map
  \[\lambda s.s\circ f:\prod_{b:B}P(b) \to \prod_{a:A}P(f(a))\] is $(k-2)$-truncated (in the sense that
  all its fibers are $(k-2)$-truncated).
\end{proposition}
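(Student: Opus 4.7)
The plan is to induct on $k$, using the standard characterization that a map $g$ is $m$-truncated if and only if for every pair of inputs $s, s'$ the induced map $\ap g : (s = s') \to (g(s) = g(s'))$ is $(m-1)$-truncated. This lets me reduce the claim one truncation level at a time by passing through the action on identity types of the precomposition map.

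For the base case $k = 0$, the family $P$ consists of $n$-types and I must show that precomposition is $(-2)$-truncated, i.e.\ an equivalence. Proposition \ref{inductionconnected} gives a section $\sigma$; for the retraction I have to show $\sigma(s \circ f) = s$ for every $s : \prod_{b:B} P(b)$. By function extensionality this reduces to constructing an element of $\prod_{b:B} Q(b)$, where $Q(b) := (\sigma(s \circ f)(b) = s(b))$ is an $n$-type (identity types of $n$-types are $n$-types, by upward closure), and $Q(f(a))$ is inhabited for every $a$ by the section property of $\sigma$. A second application of Proposition \ref{inductionconnected} to the family $Q$ then yields the required $\prod_{b:B} Q(b)$.

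For the inductive step, assume the proposition holds for $k$ and suppose $P$ is a family of $(n+k+1)$-types. I must show that $\lambda s. s \circ f$ is $(k-1)$-truncated. By the characterization above, it suffices to show that for every $s, s' : \prod_{b:B} P(b)$ the induced map $(s = s') \to (s \circ f = s' \circ f)$ is $(k-2)$-truncated. Using function extensionality on both the source and target, this induced map is equivalent to the precomposition map
\[\lambda h. h \circ f : \prod_{b:B} (s(b) = s'(b)) \to \prod_{a:A} (s(f(a)) = s'(f(a))),\]
for the family $P'(b) := s(b) = s'(b)$ of $(n+k)$-types. The inductive hypothesis applied to $P'$ gives precisely that this map is $(k-2)$-truncated, completing the step.

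The main obstacle is identifying $\ap(\lambda s. s \circ f)$, which a priori lives between identity types of function types, with the precomposition map for the family of pointwise identity types. This identification uses function extensionality at both endpoints and requires a short naturality argument to check that the resulting square commutes up to a homotopy that preserves truncation level. Once this bookkeeping is in place, the induction runs mechanically, and the use of Proposition \ref{inductionconnected} only at the base case keeps the argument self-contained.
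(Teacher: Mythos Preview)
The paper does not actually prove this proposition; it is stated with a citation to \cite[Lemma 8.6.1]{hottbook} and used as a black box. Your argument is correct and is essentially the proof given there: induction on $k$, with the base case obtained from Proposition~\ref{inductionconnected} (applied once to get a section and once more, to the family of identity types, to upgrade the section to a two-sided inverse), and the inductive step obtained by reducing $(k-1)$-truncatedness of the precomposition map to $(k-2)$-truncatedness of its action on identity types, which after function extensionality is again a precomposition map for a family one truncation level lower. The only point to be careful about is the one you flag yourself, namely that under function extensionality the map $\ap{(-\circ f)}$ really does correspond to precomposition by $f$ on pointwise identities; this is routine but must be checked.
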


Given two maps $f:X\to A$ and $g:Y\to B$, the \emph{pushout-product} of $f$ and $g$ is the map
\begin{align*}
  f\pp g &: (X\times B)\sqcup^{X\times Y}(A\times Y)\to (A\times B),\\
  (f\pp g)(\inl(x,b)) &\defeq (f(x),b),\\
  (f\pp g)(\inr(a,y)) &\defeq (a,g(y)),\\
  \ap{f\pp g}(\push(x,y)) &\defeq \idp{(f(x),g(y))}.
\end{align*}
We have the commutative square
\[
\begin{tikzcd}
  X\times Y \arrow[rr,"f\times1_Y"] \arrow[dd,"1_X\times g"'] && A\times Y \arrow[dd,"1_A\times g"]
  \arrow[ld,"\inr"'] \\
  & (X\times B)\sqcup^{X\times Y}(A\times Y) \arrow[rd,"f\pp g",dashed]
  \arrow[lu,phantom,"\ulcorner",at start] & \\
  X\times B \arrow[rr,"f\times1_B"] \arrow[ru,"\inl"] && A \times B
\end{tikzcd}
\]

We have the following proposition.
\begin{proposition}\label{pushoutproduct}
  If $f$ is $m$-connected and $g$ is $n$-connected, then $f\pp g$ is $(m+n+2)$-connected.
\end{proposition}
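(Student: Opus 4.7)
The plan is to apply Proposition~\ref{inductionconnected}: given any family $P:A\times B\to\Type$ of $(m+n+2)$-types, I must construct a section of the restriction map $\lambda s.\,s\circ(f\pp g)$. Since the source of $f\pp g$ is a pushout, the data of such an $s$ unfolds into
\begin{align*}
s_X &: \prod_{x:X,\,b:B} P(f(x),b),\\
s_Y &: \prod_{a:A,\,y:Y} P(a,g(y)),\\
h &: \prod_{x:X,\,y:Y} s_X(x,g(y)) = s_Y(f(x),y),
\end{align*}
and a section $\hat s:\prod_{(a,b):A\times B}P(a,b)$ must agree with all three pieces when composed with $f\pp g$.

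First I would fix $b:B$ and apply Proposition~\ref{inductionconnectedtruncated} to the $m$-connected map $f$ and to the family $a\mapsto P(a,b)$ of $(m+n+2)$-types. Taking $k = n+2$, this says that the restriction
\[\phi_b : \prod_{a:A}P(a,b)\to\prod_{x:X}P(f(x),b)\]
has $n$-truncated fibers. I would then define $Q:B\to\Type$ by letting $Q(b)$ be the fiber of $\phi_b$ over $s_X(-,b)$; this is a family of $n$-types on $B$.

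The key observation is that $s_Y$ and $h$ together assemble into a section of $Q\circ g$: for each $y:Y$, the function $s_Y(-,y)$ paired with the pointwise equality $\lambda x.\,h(x,y)$ witnesses $\phi_{g(y)}(s_Y(-,y)) = s_X(-,g(y))$, giving an element of $Q(g(y))$. Since $g$ is $n$-connected and $Q$ is a family of $n$-types, Proposition~\ref{inductionconnected} extends this to a section $\tilde q:\prod_{b:B}Q(b)$ whose restriction along $g$ recovers the section just defined. Unpacking $\tilde q(b)$ as a pair $(\hat s(-,b),\alpha_b)$ yields the desired function $\hat s$, together with the compatibility $\hat s(f(x),b) = s_X(x,b)$ from $\alpha_b$ and, upon specialising $b$ to $g(y)$, the compatibility $\hat s(a,g(y)) = s_Y(a,y)$ from the first component of the restriction equation; these together with $h$ also match $s$ on $\push$, giving a full section over the source pushout.

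The main obstacle is ensuring that the extension really produces a section of the whole map $\lambda s.\,s\circ(f\pp g)$, and not merely a fibrewise extension of $s_X$ along $f$. The resolution is to package the extended function \emph{together with} its homotopy against $s_X$ into the single fibre family $Q$, so that one application of Proposition~\ref{inductionconnected} along $g$ supplies both the function $\hat s$ and all the coherences at once.
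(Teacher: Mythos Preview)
Your argument is correct and follows essentially the same strategy as the paper: package the extension-with-homotopy problem into a family of fibers of a restriction map, use Proposition~\ref{inductionconnectedtruncated} to bound the truncation level of that family, and then extend a section of it using Proposition~\ref{inductionconnected}. The only difference is that you and the paper swap the roles of $f$ and $g$: the paper forms the fiber family $F(a)$ over $A$ (using $g$'s $n$-connectedness to make $F$ an $m$-type) and then extends along $f$, whereas you form $Q(b)$ over $B$ (using $f$'s $m$-connectedness to make $Q$ an $n$-type) and extend along $g$---a harmless symmetry.
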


\begin{proof}

  We use proposition \ref{inductionconnected}. We consider $P:A\times B\to\Type$ a family of
  $(n+m+2)$-types together with
  \[k:(u:(X\times B)\sqcup^{X\times Y}(A\times Y)) \to P((f\pp g)(u)).\] By splitting $k$ in three
  parts and currying, it is enough to prove the following lemma.
  
  \begin{lemma}
    Suppose we have $P : A \to B \to \Type$ a family of $(n+m+2)$-types together with
    \begin{align*}
      u &: (x:X)(b:B)\to P(f(x),b),\\
      v &: (a:A)(y:Y)\to P(a, g(y)),\\
      w &: (x:X)(y:Y)\to u(x,g(y))=_{P(f(x),g(y))}v(f(x),y).
    \end{align*}
    Then there exists a map
    \[h : (a:A)(b:B)\to P(a,b)\] together with homotopies
    \begin{align*}
      p &: (x:X)(b:B)\to h(f(x),b) = u(x,b),\\
      q &: (a:A)(y:Y)\to h(a,g(y)) = v(a,y),\\
      r &: (x:X)(y:Y)\to p(x,g(y))\inv\cdot q(f(x),y) = w(x,y).
    \end{align*}
  \end{lemma}
  
  \begin{proof}
    Let’s define $F:A\to\Type$ by
    \[F(a)\defeq \sum_{k:(b:B) \to P(a,b)}((y:Y) \to k(g(y)) = v(a,y)).\]
    For a given $a:A$, the type $F(a)$ is the fiber of the map
    \[\lambda s.s\circ g:\prod_{b:B}P(a,b) \to \prod_{y:Y}P(a,g(y))\]
    at $v(a,-)$.
    Given that $g$ is $n$-connected and that $P(a,-)$ is a family of $(n+m+2)$-truncated types,
    proposition \ref{inductionconnectedtruncated} shows that $F(a)$ is $m$-truncated.
  
    For every $x:X$ we have an element of $F(f(x))$ given by $(u(x, -), w(x, -))$.  Hence, using the
    fact that $f$ is $m$-connected and proposition \ref{inductionconnected}, there is a map
    $k:(a:A)\to F(a)$ together with a homotopy $\varphi$ between $k\circ f$ and $\lambda
    x.(u(x,-),w(x,-))$.
    We can now define $h$, $p$, $q$, and $r$ by
    \begin{align*}
      h(a,b) &\defeq \fst(k(a))(b),\\
      p(x,b) &\defeq \fst(\varphi(x))(b),\\
      q(a,y) &\defeq \snd(k(a))(y),\\
      r(x,y) &\defeq \snd(\varphi(x))(y). %\qedhere
    \end{align*}
  \end{proof}
  This concludes the proof that $f\pp g$ is $(n+m+2)$-connected.\qed
\end{proof}

We can now compute the connectivity of the maps $i_n$.

\begin{proposition}
  If $A$ is $k$-connected, then the map $i_n$ is $(n(k+1)+(k-1))$-connected for every $n:\N$.
\end{proposition}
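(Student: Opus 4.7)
The plan is to proceed by induction on $n$. For the base case $n = 0$, the map $i_0 : \Unit \to A$ is the basepoint inclusion $\epsilon \mapsto \star_A$. Its fibers are of the form $\star_A = a$, and since $A$ is $k$-connected each such path space is $(k-1)$-connected, matching the claimed formula $0\cdot(k+1)+(k-1) = k-1$.

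For the inductive step, I would exploit the pushout diagram (\ref{eq:jn+2a}) defining $J_{n+2}A$. The key observation is that, up to swapping $\inl$ and $\inr$, the left-hand vertical map $f$ is precisely the pushout-product $\iota_A \pp i_n$, where $\iota_A : \Unit \to A$ denotes the basepoint inclusion: the clause $f(\inl(a,x)) = (a, i_n(x))$ is the $A$-side of the pushout-product, and $f(\inr(y)) = (\star_A, y)$ is the $\Unit$-side, which matches the universal description of $\iota_A \pp i_n$ after the identification $\Unit \times X \simeq X$. Assuming inductively that $i_n$ is $(n(k+1)+(k-1))$-connected, and using the base case (so $\iota_A$ is $(k-1)$-connected), proposition \ref{pushoutproduct} then gives that $f$ is $(k-1)+(n(k+1)+(k-1))+2 = (n+1)(k+1)+(k-1)$ connected.

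To conclude, I would invoke the standard fact that connectivity transfers across pushouts: since $i_{n+1}$ is the right-hand vertical map of a pushout square whose left-hand map is $f$, and since $c$-truncation commutes with pushouts (so a pushout of an equivalence is an equivalence), $i_{n+1}$ is as connected as $f$. This yields exactly the desired bound and completes the induction.

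The main obstacle is really just the algebraic identification of the paper's $f$ with the pushout-product $\iota_A \pp i_n$; once that identification is in place, the rest is a mechanical application of proposition \ref{pushoutproduct} together with the pushout-preservation of connectivity. A secondary point worth checking carefully is that all the reindexing matches: the pushout-product introduces a $\Unit \times X \simeq X$ and a swap of $\inl$/$\inr$, neither of which affects connectivity but both of which must be tracked if the proof is to be formalized.
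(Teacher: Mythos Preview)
Your proof is correct and follows essentially the same route as the paper: induction on $n$, with the base case using that the basepoint inclusion into a $k$-connected type is $(k-1)$-connected, and the inductive step identifying $f$ with the pushout-product of the basepoint inclusion and $i_n$, applying proposition \ref{pushoutproduct}, and then using that pushouts preserve connectivity. Your write-up is in fact slightly more explicit than the paper's (you spell out the fiber computation for $i_0$ and the arithmetic check), but the argument is the same.
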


\begin{proof}
  We proceed by induction on $n$. For $0$, the map $i_0$ is the inclusion of the basepoint of $A$,
  hence $i_0$ is $(k-1)$-connected because $A$ is $k$-connected.

  For $n+1$, the map $f$ in the diagram defining $J_{n+2}A$ (page \pageref{eq:jn+2a}) is the
  pushout-product of $i_n$ and of the map $\Unit\to A$ (which is $(k-1)$-connected). Hence $f$ is
  $((n+1)(k+1)+(k-1))$-connected by proposition \ref{pushoutproduct}. Therefore the map $i_{n+1}$ is
  $((n+1)(k+1)+(k-1))$-connected as well, because a pushout of an $\ell$-connected map is
  $\ell$-connected.\qed
\end{proof}

In the following proposition, we consider an arbitrary family of types $(A_n)_{n:\N}$ and maps
$(i_n:A_n\to A_{n+1})_{n:\N}$, with sequential colimit $A_\infty$.
\begin{proposition}
  Given $k:\N$, if all the maps $i_0$, $i_1$, \dots are $k$-connected, then $\inn_0$ is also
  $k$-connected.
\end{proposition}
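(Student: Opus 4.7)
The plan is to apply Proposition \ref{inductionconnected} in the ``if'' direction: it suffices to show that for every family of $k$-types $P : A_\infty \to \Type$, the restriction map
\[
\lambda s.\, s \circ \inn_0 : \prod_{z : A_\infty} P(z) \to \prod_{a : A_0} P(\inn_0(a))
\]
admits a section. So I fix such a $P$ together with $s_0 : (a : A_0) \to P(\inn_0(a))$, and try to construct $s : (z : A_\infty) \to P(z)$ whose restriction to $A_0$ agrees with $s_0$.

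The natural way to build $s$ is through the elimination rule for $A_\infty$, which reduces the problem to producing, by induction on $n : \N$, a family of maps $s_n : (x : A_n) \to P(\inn_n(x))$ together with dependent paths
\[
t_n : (x : A_n) \to s_n(x) =^P_{\push_n(x)} s_{n+1}(i_n(x)).
\]
The base case $s_0$ is given. For the inductive step, given $s_n$, I define $u_n : (x : A_n) \to P(\inn_{n+1}(i_n(x)))$ by $u_n(x) \defeq \transport^P_{\push_n(x)}(s_n(x))$. The family $Q(y) \defeq P(\inn_{n+1}(y))$ is a family of $k$-types over $A_{n+1}$, so since $i_n$ is $k$-connected, Proposition \ref{inductionconnected} produces $s_{n+1} : (y : A_{n+1}) \to Q(y)$ together with a pointwise homotopy $s_{n+1}(i_n(x)) = u_n(x)$. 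Reinterpreting this equality through the definition of $u_n$ yields exactly the dependent path $t_n(x)$.

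Once all the $s_n$ and $t_n$ are in hand, the elimination rule of $A_\infty$ assembles them into the desired $s : (z : A_\infty) \to P(z)$, and the equation $s(\inn_0(a)) = s_0(a)$ holds judgementally, so $s$ is a pre-image of $s_0$ under the restriction map. The main subtlety I anticipate is not conceptual but administrative: the pair $(s_{n+1}, t_n)$ must be produced simultaneously (since $t_n$ mentions $s_{n+1}$), and the pointwise homotopy returned by the section of $(\lambda s.\, s \circ i_n)$ has to be repackaged as a dependent path over $\push_n(x)$ of the precise type demanded by the induction on $A_\infty$. This is routine application of the standard dictionary between transports and dependent paths, with no further geometric content needed.
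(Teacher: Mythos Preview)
Your proof is correct and follows essentially the same route as the paper: use Proposition~\ref{inductionconnected} to reduce to building a section of $P$ extending $s_0$, then induct on $n$ to produce $s_n$ and the dependent paths $t_n$ by repeatedly applying Proposition~\ref{inductionconnected} to the $k$-connected maps $i_n$, and finally assemble everything via the elimination rule for $A_\infty$. The paper's argument is identical up to notation (it writes $d_n$, $\widetilde d_n$, $P_{n+1}$ where you write $s_n$, $u_n$, $Q$), and your remark about repackaging the homotopy $s_{n+1}(i_n(x)) = u_n(x)$ as a dependent path over $\push_n(x)$ is exactly what the paper does implicitly.
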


\begin{proof}
  Let’s consider $P:A_\infty\to\Type$ a family of $k$-truncated types and $d_0:(x:A_0)\to
  P(\inn_0(x))$. Using proposition \ref{inductionconnected}, it is enough to construct a section $d$
  of $P$ which is equal to $d_0$ on $A_0$ to conclude that $\inn_0$ is $k$-connected. We define a
  family of maps $d_n:(x:A_n)\to P(\inn_n(x))$ by induction on $n$, starting with the given $d_0$
  for $n=0$, as follows. Let’s consider
  \begin{align*}
    P_{n+1} &: A_{n+1}\to\Type,\\
    P_{n+1}(x) &\defeq P(\inn_{n+1}(x)).
  \end{align*}
  It is a family of $k$-truncated types, the map $i_n$ is $k$-connected, and we have
  \begin{align*}
    \widetilde{d}_n &: (x : A_n) \to P_{n+1}(i_n(x)),\\
    \widetilde{d}_n(x) &\defeq \transport^P(\push_n(x),d_n(x)),
  \end{align*}
  therefore, using proposition \ref{inductionconnected} again, there is a map
  $d_{n+1}:(x:A_{n+1})\to P(\inn_{n+1}(x))$ satisfying
  \[
  d_{n+1}(i_n(x)) =^P_{\push_n(x)} d_n(x).
  \]
  The family $(d_n)_{n:\N}$ together with those equalities gives a section of $P$ which is equal to
  $d_0$ on $A_0$. Therefore, the map $\inn_0$ is $k$-connected, which is what we wanted to prove.\qed
\end{proof}

It follows that if the maps $i_n$, $i_{n+1}$, \dots are $k$ connected, then $\inn_n$ is also
$k$-connected. Therefore, in the case of the James construction, we have the following proposition.

\begin{proposition}
  If $A$ is $k$-connected, then the map $\inn_n:J_nA\to \JiA$ is $(n(k+1)+(k-1))$-connected for every
  $n:\N$.
\end{proposition}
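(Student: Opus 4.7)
The plan is to combine the two connectivity results established just before this statement: the proposition bounding the connectivity of each map $i_n$, and the observation (following the proposition on sequential colimits) that if the maps $i_n, i_{n+1}, \dots$ are all $k$-connected, then so is $\inn_n$.

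Concretely, I would first fix $n$ and apply the preceding proposition to every index $m \geq n$: since $A$ is $k$-connected, the map $i_m : J_mA \to J_{m+1}A$ is $(m(k+1)+(k-1))$-connected. Next, I would observe that the function $m \mapsto m(k+1)+(k-1)$ is monotone (since $k+1 \geq 1$), so for every $m \geq n$ we have $m(k+1)+(k-1) \geq n(k+1)+(k-1)$. Using the standard fact that an $\ell$-connected map is also $\ell'$-connected for any $\ell' \leq \ell$, this shows that each $i_m$ with $m \geq n$ is in particular $(n(k+1)+(k-1))$-connected.

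Finally, I would invoke the variant (mentioned just after the proposition on colimits) saying that if the maps $i_n, i_{n+1}, \ldots$ are all $k$-connected then $\inn_n$ is $k$-connected. Applying it with $k$ replaced by $n(k+1)+(k-1)$ yields the desired conclusion that $\inn_n : J_nA \to \JiA$ is $(n(k+1)+(k-1))$-connected.

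There is no real obstacle here: the statement is a direct corollary of the two preceding propositions together with the monotonicity in the connectivity parameter. The only point that is worth being explicit about in the write-up is the appeal to the shifted version of the colimit proposition, which one justifies by noting that the colimit of the tail $(J_mA)_{m \geq n}$ along $(i_m)_{m \geq n}$ is again $\JiA$, and under this identification the map $\inn_n$ corresponds to the base inclusion to which the original proposition applies.
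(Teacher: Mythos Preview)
Your proposal is correct and matches the paper's approach: the paper treats this proposition as an immediate corollary of the two preceding results (connectivity of each $i_m$ and the shifted colimit statement), and you have spelled out precisely that deduction, including the monotonicity step and the justification of the shifted version via the tail colimit. One tiny remark: your claim ``$k+1\ge 1$'' tacitly assumes $k\ge 0$; for $k=-1$ the bound is $-2$ and the statement is vacuous since every map is $(-2)$-connected, so nothing is lost.
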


Combining this result with those of the previous sections, for $n=1$ we obtain the Freudenthal
suspension theorem (a more direct proof was given in \cite[section 8.6]{hottbook}).

\begin{corollary}[Freudenthal suspension theorem]\label{freud}
  Given a $k$-connected pointed type $A$, the map
  \begin{align*}
    \varphi_A&:A\to\Omega\Susp A,\\
    \varphi_A(x) &:= \merid(x) \concat\merid(\star_A)\inv,
  \end{align*}
  is $2k$-connected.
\end{corollary}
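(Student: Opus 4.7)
The plan is to derive this immediately by specialising the previous proposition to $n=1$ and transporting the result across the equivalences $A = J_1A \xrightarrow{\inn_1} \JiA \simeq JA \simeq \Omega\Susp A$. Since the statement is vacuous when $k \leq -1$ (every map is $(-2)$-connected), we may assume $k \geq 0$; under this assumption $A$ is in particular $0$-connected, so the equivalence $JA \simeq \Omega\Susp A$ of section \ref{sec:equivJA} applies, and the equivalence $\JiA \simeq JA$ is provided by sections \ref{sec:twofunctions} and \ref{sec:twocomposites}.

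First I would apply the previous proposition with $n = 1$: since $J_1A = A$, this says that $\inn_1 : A \to \JiA$ is $(1\cdot(k+1) + (k-1)) = 2k$-connected. Because connectivity is preserved under pre- and post-composition with equivalences, the composite $\mathsf{e} \circ \tof \circ \inn_1 : A \to \Omega\Susp A$ is also $2k$-connected, where $\tof : \JiA \xrightarrow{\simeq} JA$ is the map of section \ref{sec:twofunctions} and $\mathsf{e} : JA \xrightarrow{\simeq} \Omega\Susp A$ is the equivalence extracted from section \ref{sec:equivJA}.

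It then remains to identify this composite with $\varphi_A$. By definition $\tof(\inn_1(a)) = \innJ_1(a) = \alphaJ(a, \epsilonJ)$. Unfolding the flattening-lemma construction of section \ref{sec:equivJA}, the contractibility of $\sum_{x : \Susp A} F(x)$ supplies an equivalence $F(\north) \simeq \Omega\Susp A$ whose inverse sends $p$ to $\transport^F(p, \epsilonJ)$. Using the definition of $F$ on $\merid$ together with univalence, one computes $\transport^F(\merid(a) \concat \merid(\star_A)\inv, \epsilonJ) = \alphaJ(\star_A,-)\inv(\alphaJ(a, \epsilonJ))$, and since $\alphaJ(\star_A, -)$ is homotopic to the identity via $\deltaJ$ this reduces to $\alphaJ(a, \epsilonJ)$. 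Therefore $\mathsf{e}\inv(\varphi_A(a)) = \tof(\inn_1(a))$, so $\mathsf{e} \circ \tof \circ \inn_1 \sim \varphi_A$ and the connectivity transfers.

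The main obstacle is the last paragraph: extracting the explicit form of the equivalence $\mathsf{e}$ from the contractibility argument of section \ref{sec:equivJA} and verifying the transport calculation in $F$. Everything else is an immediate consequence of the previous proposition together with the invariance of connectivity under equivalences.
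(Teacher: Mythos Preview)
Your proposal is correct and follows exactly the route the paper indicates: the paper simply states that the corollary is obtained by ``combining this result with those of the previous sections, for $n=1$,'' and separately notes that the case $k=-1$ is vacuous because every map is $(-2)$-connected. You have supplied the details the paper omits, in particular the identification of the composite $A = J_1A \xrightarrow{\inn_1} \JiA \simeq JA \simeq \Omega\Susp A$ with the explicit map $\varphi_A$, which the paper leaves entirely implicit; your sketch of that identification via the flattening construction and the computation of transport in $F$ is the right idea, and you are correct that it is the only place where genuine work remains.
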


For $n=2$, we obtain the following corollary.

\begin{corollary}\label{corjames}
  Given a $k$-connected pointed type $A$, there is a $(3k+1)$-connected map
  \[(A\times A)\sqcup^{A\vee A}A\to\Omega\Susp A,\]
\end{corollary}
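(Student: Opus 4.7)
The plan is to apply the connectivity proposition with $n=2$ and then chase the resulting map through the identifications $J_2A \simeq (A\times A)\sqcup^{A\vee A}A$ on the source side and $\JiA \simeq JA \simeq \Omega\Susp A$ on the target side.

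First I would instantiate the preceding proposition at $n=2$: since $A$ is $k$-connected, the map $\inn_2 : J_2A \to \JiA$ is $(2(k+1)+(k-1))$-connected, and this number simplifies to $3k+1$. Since $k \geq 0$ (the notion of pointed $k$-connected type with $k \geq 0$ gives $0$-connectedness, which is what we need), the results of sections \ref{sec:twofunctions}--\ref{sec:equivJA} provide equivalences $\JiA \simeq JA \simeq \Omega\Susp A$, and an equivalence is in particular $n$-connected for every $n$. Composing $\inn_2$ with these equivalences therefore yields a $(3k+1)$-connected map $J_2A \to \Omega\Susp A$; it remains to identify the source.

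Next I would unfold the defining pushout \eqref{eq:jn+2a} at $n=0$. Here $J_0A = \Unit$ and $J_1A = A$, and the top-left pushout $(A\times J_0A)\sqcup^{J_0A}J_1A$ becomes the pushout of $A\times\Unit \xleftarrow{(\star_A,-)} \Unit \xrightarrow{\star_A} A$, which is precisely the wedge $A\vee A$. The map $f$ then sends $\inl(a,\epsilon)\mapsto(a,\star_A)$ and $\inr(y)\mapsto(\star_A,y)$, so it is the canonical inclusion $A\vee A \hookrightarrow A\times A$; the map $g$ sends $\inl(a,\epsilon)\mapsto a$ and $\inr(y)\mapsto y$, so it is the fold map $\fold : A\vee A \to A$. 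Hence $J_2A$ is by definition the pushout $(A\times A)\sqcup^{A\vee A}A$, and composing the resulting equivalence with the $(3k+1)$-connected map above concludes the proof.

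The whole argument is essentially a bookkeeping computation: the substantive input is the connectivity of $\inn_n$ together with the equivalence $\JiA\simeq\Omega\Susp A$, both established earlier. The only step that requires a bit of care is recognising the source: one must carefully unfold the pushout defining $J_2A$, check that the pushout of two copies of $A$ along $\Unit$ really is the wedge $A\vee A$ (respecting basepoints), and verify that $f$ and $g$ coincide with the wedge inclusion and the fold map respectively. None of these verifications should present real difficulty.
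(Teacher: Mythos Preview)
Your proposal is correct and follows exactly the paper's approach: instantiate the connectivity of $\inn_n$ at $n=2$ to get a $(3k+1)$-connected map $J_2A\to\JiA$, identify $J_2A$ with the displayed pushout by unfolding the defining diagram at $n=0$, and compose with the equivalence $\JiA\simeq JA\simeq\Omega\Susp A$. The paper only adds the remark that the case $k=-1$ holds trivially because every map is $(-2)$-connected, which covers the one value of $k$ for which your appeal to $0$-connectedness of $A$ would fail.
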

where the \emph{wedge sum} $A\vee B$ of two pointed types $A$ and $B$ is defined as the pushout of
the span
\[
\begin{tikzcd}
  A & \Unit \arrow[l] \arrow[r] & B.
\end{tikzcd}
\]

Note that both corollaries are also true in the case $k=-1$ because every map is $(-2)$-connected.

\section{Whitehead products}\label{sec:wproducts}

In proposition \ref{whiteheadmap} we give a decomposition of a product of spheres into a pushout of
spheres. This will allow us to define Whitehead products, which are used in the next section to
define the natural number $n$ such that $\pi_4(\Sn3)\simeq\Z/n\Z$.

Given two pointed types $A$ and $B$, their \emph{join} $A*B$ is defined as the pushout of the span
\[
\begin{tikzcd}
  A & A\times B \arrow[l,"\fst"'] \arrow[r,"\snd"] & B.
\end{tikzcd}
\]
If $A$ and $B$ are spheres, one can show that we have the equivalence
\[\Sn{n-1}*\Sn{m-1} \simeq \Sn{n+m-1}.\]

\begin{proposition}\label{whiteheadmap}
  Given $n,m:\N^*$, there is a map $W_{n,m}:\Sn{n+m-1}\to\Sn{n}\vee\Sn{m}$ such that
  \[\Sn n\times\Sn m \simeq \Unit\sqcup^{\Sn{n+m-1}}(\Sn n\vee\Sn m),\]
  and such that the induced map $\Sn n\vee\Sn m\to\Sn n\times\Sn m$ is the canonical one.
\end{proposition}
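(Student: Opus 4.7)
The plan is to identify the canonical map $\iota : \Sn n \vee \Sn m \to \Sn n \times \Sn m$ as a pushout-product of basepoint inclusions, compute its cofiber as $\Sn{n+m}$, and use pushout pasting with the suspension presentation $\Sn{n+m} \simeq \Susp \Sn{n+m-1}$ to extract $W_{n,m}$ and the desired pushout.

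First I would observe that $\iota$ is the pushout-product $(\Unit\to\Sn n)\pp(\Unit\to\Sn m)$, which just unwinds the definition of $\Sn n \vee \Sn m$ as a pushout. Using the general lemma that the cofiber of a pushout-product equals the smash of the cofibers (provable by a $3\times 3$ argument), together with the identification $\Sn n\wedge\Sn m\simeq\Sn{n+m}$ (provable by induction on $n+m$), the cofiber of $\iota$ is $\Sn{n+m}$, giving a pushout square
\[
\begin{tikzcd}
\Sn n \vee \Sn m \arrow[r, "\iota"] \arrow[d] & \Sn n \times \Sn m \arrow[d] \\
\Unit \arrow[r] & \Sn{n+m}.
\end{tikzcd}
\]

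Writing $\Sn{n+m}\simeq\Susp\Sn{n+m-1}$ as the pushout $\Unit\sqcup^{\Sn{n+m-1}}\Unit$, I would then form the horizontally composed diagram
\[
\begin{tikzcd}
\Sn{n+m-1} \arrow[r,"W_{n,m}"] \arrow[d] & \Sn n \vee \Sn m \arrow[r, "\iota"] \arrow[d] & \Sn n \times \Sn m \arrow[d] \\
\Unit \arrow[r] & \Unit \arrow[r] & \Sn{n+m}
\end{tikzcd}
\]
arranged so that the right square is the cofiber pushout above and the outer rectangle is the suspension pushout of $\Sn{n+m-1}$. The cancellation direction of the pushout pasting lemma (if the outer rectangle and the right square are pushouts, then so is the left square) then yields the desired pushout $\Sn n\times\Sn m \simeq \Unit\sqcup^{\Sn{n+m-1}}(\Sn n\vee\Sn m)$ with $\iota$ as the induced inclusion.

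The main obstacle is constructing $W_{n,m}$ so that the compound diagram actually commutes and the outer rectangle becomes the suspension pushout. The composite $\iota\circ W_{n,m}:\Sn{n+m-1}\to\Sn n\times\Sn m$ is null-homotopic because it factors through $\Sn n\vee\Sn m$, which is itself collapsed to a point in $\Sn{n+m}$; the choice of null-homotopy fitting the cofiber square is exactly what determines $W_{n,m}$. Carrying this out requires carefully unwinding the pushout-product cofiber computation and the equivalence $\Sn n\wedge\Sn m\simeq\Sn{n+m}$, and the canonicity of the induced inclusion follows by chasing the cocone through the pasting.
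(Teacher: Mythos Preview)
Your proposal has a genuine gap at the pasting step. The cancellation direction you invoke---``if the outer rectangle and the right square are pushouts, then so is the left square''---is not valid for pushouts. The pushout pasting lemma says: if the \emph{left} square is a pushout, then the right square is a pushout if and only if the outer rectangle is. The direction you want is the one that holds for pullbacks, and it fails for pushouts in general.

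Moreover, even granting the invalid pasting, the left square in your diagram has $\Unit$ in both bottom corners, so its being a pushout would say that the cofiber of $W_{n,m}$ is contractible---not that $\Sn n\times\Sn m \simeq \Unit\sqcup^{\Sn{n+m-1}}(\Sn n\vee\Sn m)$. The square you actually want, with $\Sn n\times\Sn m$ in the lower-right corner and $\iota$ as the right vertical map, does not appear anywhere in your composed rectangle. Knowing only that $\operatorname{cofib}(\iota)\simeq\Sn{n+m}$ is not enough: two maps can have equivalent cofibers without either arising as the cofiber of a map from a common source, so you cannot read off $W_{n,m}$ together with the desired pushout from the cofiber computation alone.

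The paper avoids this by proving a more general statement directly via the $3\times3$ lemma: for arbitrary $A,B$ one builds a $3\times3$ diagram whose row-wise pushouts give the span $\Unit\leftarrow A*B\to\Susp A\vee\Susp B$ and whose column-wise pushouts assemble to $\Susp A\times\Susp B$, yielding $\Susp A\times\Susp B\simeq\Unit\sqcup^{A*B}(\Susp A\vee\Susp B)$ together with an explicit formula for $W_{A,B}$ in a single step. Specializing to $A=\Sn{n-1}$, $B=\Sn{m-1}$ and using $\Sn{n-1}*\Sn{m-1}\simeq\Sn{n+m-1}$ then gives the proposition. The join plays exactly the role your $\Sn{n+m-1}$ should play, and the $3\times3$ argument produces the attaching map and the pushout square simultaneously rather than trying to reverse-engineer them from the cofiber.
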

We first prove the following more general version which isn’t more complicated to prove.
\begin{proposition}\label{whiteheadab}
  Given two types $A$ and $B$, there is a map $W_{A,B}:A*B\to\Susp A\vee\Susp B$ such that
  \[\Susp A\times\Susp B \simeq \Unit\sqcup^{A*B}(\Susp A\vee\Susp B)\]
  and such that the induced map $\Susp A\vee\Susp B\to\Susp A\times\Susp B$ is the canonical one.
\end{proposition}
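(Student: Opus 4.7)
The strategy is to construct the map $W_{A,B}$ explicitly and then produce the equivalence by giving maps in both directions and verifying they are mutually inverse. Using the pushout presentation $A*B=A\sqcup^{A\times B}B$, I define $W_{A,B}:A*B\to\Susp A\vee\Susp B$ by sending both $\inl(a)$ and $\inr(b)$ to the wedge basepoint, and sending the path $\push(a,b)$ to a specific loop at the basepoint built from the suspension meridians of $a$ and $b$, namely a commutator-style combination of $\ap{\inl}(\merid(a)\concat\merid(\star_A)\inv)$ and $\ap{\inr}(\merid(b)\concat\merid(\star_B)\inv)$. The precise formula is dictated by the requirement that, under the canonical inclusion into $\Susp A\times\Susp B$, this loop bound the meridian-square $\merid(a)\times\merid(b)$.

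Next, I construct $f:\Unit\sqcup^{A*B}(\Susp A\vee\Susp B)\to\Susp A\times\Susp B$ extending the canonical wedge inclusion: the cone tip maps to the basepoint and the cone path over each $w:A*B$ maps to a specific null-homotopy of the image of $W_{A,B}(w)$ in the product; on the 2-dimensional constructor coming from $\push(a,b)$ in $A*B$, I use the filler of $\merid(a)\times\merid(b)$ in the product. In the other direction, I present $\Susp A\times\Susp B$ as an iterated pushout using the fact that products preserve pushouts (first from $\Susp B=\Unit\sqcup^B\Unit$ and then from $A\times\Susp B=A\sqcup^{A\times B}A$), and define $g:\Susp A\times\Susp B\to\Unit\sqcup^{A*B}(\Susp A\vee\Susp B)$ on this iterated structure: the ``corner'' points go into the wedge, the ``axis'' meridians go via $\ap{\inl}$ and $\ap{\inr}$ in the wedge, and the 2-cells $\merid(a)\times\merid(b)$ are filled using the cone constructor of the outer pushout applied to $\push(a,b)\in A*B$.

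The main obstacle is verifying $f\circ g\simeq\id$ and $g\circ f\simeq\id$, which reduces to a sequence of cubical coherence checks at the HIT constructors, most notably a 3-dimensional filler comparing the meridian-square $\merid(a)\times\merid(b)$ with the image under $f$ of the cone filler over $\push(a,b)$. This is precisely what the choice of $W_{A,B}$ is designed to enable, but as is typical for higher path-algebra in homotopy type theory, the verification involves non-trivial coherence manipulation in the style of Sections~\ref{sec:twofunctions} and~\ref{sec:twocomposites}. The additional claim that the induced map $\Susp A\vee\Susp B\to\Susp A\times\Susp B$ is the canonical inclusion is then immediate from the definition of $f$.
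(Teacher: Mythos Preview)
Your approach is sound in outline but takes a genuinely different route from the paper. The paper does not build the equivalence by hand; instead it applies the $3\times3$-lemma to the diagram
\[
\begin{tikzcd}
  \Susp A & B \arrow[l,"\north"'] \arrow[r] & \Unit \\
  B \arrow[u,"\south"] \arrow[d,"\id"'] & A\times B \arrow[l,"\snd"] \arrow[r,"\fst"'] \arrow[u,"\snd"'] \arrow[d,"\snd"] & A \arrow[u] \arrow[d] \\
  B & B \arrow[l,"\id"] \arrow[r] & \Unit
\end{tikzcd}
\]
computing the row-wise iterated pushout as $\Unit\sqcup^{A*B}(\Susp A\vee\Susp B)$ and the column-wise one as $\Susp A\times\Susp B$. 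The map $W_{A,B}$ then falls out as the induced map between row pushouts, and all the higher coherences are absorbed into the single invocation of the $3\times3$-lemma. Your direct construction---explicit $f$ and $g$ plus round-trip homotopies---is more elementary in that it avoids that lemma, and it makes the Whitehead map and the equivalence completely explicit; but the price is exactly the 3-dimensional cube you flag at the end, which the $3\times3$-lemma packages away. One small point: your $W_{A,B}$ sends both $\inl(a)$ and $\inr(b)$ to the wedge basepoint, whereas the paper's sends them to $\inr(\north)$ and $\inl(\north)$ respectively, with the path constructor going to $\ap\inr(\varphi_B(b))\concat\push(\ttt)\concat\ap\inl(\varphi_A(a))$ rather than a commutator loop; the two are homotopic, but if you carry out your version you should check that the boundary of your ``commutator-style'' loop really does match the boundary of the meridian square in the product, since an actual commutator $pqp^{-1}q^{-1}$ is not what bounds that square.
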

\begin{proof}
  We consider the following diagram
  \[
  \begin{tikzcd}
    \Susp A & B \arrow[l,"\north"'] \arrow[r] \arrow[ld,"\alpha"',Rightarrow,shorten >= 1.5ex,
    shorten <= 1.5ex] & \Unit \\
    B \arrow[u,"\south"] \arrow[d,"\id"'] & A\times B \arrow[l,"\snd"] \arrow[r,"\fst"']
    \arrow[u,"\snd"']
    \arrow[d,"\snd"] & A \arrow[u] \arrow[d] \\
    B & B \arrow[l,"\id"] \arrow[r] & \Unit
  \end{tikzcd}
  \]
  where $\alpha:A\times B\to\north=_{\Susp A}\south$ is defined by $\alpha(x,y)\defeq\merid(x)$, and
  we use the $3\times3$-lemma (cf section VII of \cite{cubicalDan}) which states that the pushout of
  the pushouts of the rows is equivalent to the pushout of the pushouts of the columns.

  The pushout of the top row is equivalent to $\Susp A\vee\Susp B$, the pushout of the middle row is
  equivalent to the join $A*B$ and the pushout of the bottom row is contractible, so the pushout of
  the pushouts of the rows is equivalent to $\Unit\sqcup^{A*B}(\Susp A\vee\Susp B)$ for the map
  $W_{A,B}:A*B\to\Susp A\vee\Susp B$ defined by
  \begin{align*}
    W_{A,B} &: A*B \to \Susp A\vee\Susp B,\\
    W_{A,B}(\inl(a)) &\defeq \inr(\north),\\
    W_{A,B}(\inr(b)) &\defeq \inl(\north),\\
    \ap{W_{A,B}}(\push(a,b)) &\defeq \ap\inr(\varphi_B(b))\concat\push(\ttt)\concat\ap\inl(\varphi_A(a)).
  \end{align*}

  The pushouts of the left and of the right columns are both equivalent to $\Susp A$, and the
  pushout of the middle column is equivalent to $\Susp A\times B$. Moreover, the horizontal map on
  the left between $\Susp A\times B$ and $\Susp A$ is equal to $\fst$, as can be proved by
  induction using the definition of $\alpha$. The horizontal map on the right is also equal to
  $\fst$. Hence the pushout of the pushout of the columns is equivalent to $\Susp A\times\Susp
  B$. Therefore we have
  \[\Susp A\times\Susp B \simeq \Unit\sqcup^{A*B}(\Susp A\vee\Susp B)\]
  and it can be checked that the induced map $\Susp A\vee\Susp B\to\Susp A\times\Susp B$ is the
  canonical one. \qed
\end{proof}

\begin{proof}[of proposition \ref{whiteheadmap}]
  We apply proposition \ref{whiteheadab} to $A\defeq\Sn{n-1}$ and $B\defeq\Sn{m-1}$, and we obtain
  \[\Sn n\times\Sn m \simeq \Unit\sqcup^{\Sn{n-1}*\Sn{m-1}}(\Sn n\vee\Sn m).\]
  Moreover, we have $\Sn{n-1}*\Sn{m-1}\simeq\Sn{n+m-1}$, as mentioned earlier, which concludes. \qed
\end{proof}

This allows us to define the following operation on homotopy groups.
\begin{definition}
  Given a pointed type $X$ and two positive integers $n$ and $m$, the \emph{Whitehead product} is
  the function
  \[[-,-] : \pi_n(X) \times \pi_m(X) \to \pi_{n+m-1}(X)\]
  defined by composition with $W_{n,m}$ when representing elements of homotopy groups as maps from
  the spheres.
\end{definition}

\section{Application to homotopy groups of spheres}\label{sec:pi4s3}

The sphere $\Sn n$ is $(n-1)$-connected, therefore by the Freudenthal suspension theorem (corollary
\ref{freud}), the map $\varphi_{\Sn n}:\Sn n\to\Omega\Sn{n+1}$ is $(2n-2)$-connected. On homotopy
groups it gives the following result.

\begin{proposition}
  For $k,n:\N$, the map $\pi_{n+k}(\Sn n)\to\pi_{n+k+1}(\Sn{n+1})$ is an isomorphism if $n\ge k+2$
  and surjective if $n=k+1$.
\end{proposition}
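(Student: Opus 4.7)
The plan is to apply the Freudenthal suspension theorem (Corollary \ref{freud}) to the $n$-sphere, and then translate the resulting connectedness of $\varphi_{\Sn n}$ into information about induced maps on homotopy groups.

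First, since $\Sn n$ is $(n-1)$-connected, Corollary \ref{freud} with $k := n-1$ gives that the map $\varphi_{\Sn n} : \Sn n \to \Omega\Sn{n+1}$ is $(2n-2)$-connected. The map $\pi_{n+k}(\Sn n)\to\pi_{n+k+1}(\Sn{n+1})$ of interest is (up to the canonical identification $\pi_i(\Omega Y) \cong \pi_{i+1}(Y)$) the map induced on $\pi_{n+k}$ by $\varphi_{\Sn n}$.

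Second, I would invoke the general fact that if $f : X \to Y$ is an $m$-connected pointed map, then the induced homomorphism $\pi_i(f) : \pi_i(X) \to \pi_i(Y)$ is an isomorphism for $i \le m$ and a surjection for $i = m+1$. This follows from the long exact sequence in homotopy groups associated to the fiber sequence $\mathrm{fib}(f) \to X \to Y$: $m$-connectedness of $f$ is exactly the statement that $\pi_i(\mathrm{fib}(f)) = 0$ for $i \le m$, and a direct chase in the portion
\[
\pi_{i+1}(Y) \to \pi_i(\mathrm{fib}(f)) \to \pi_i(X) \to \pi_i(Y) \to \pi_{i-1}(\mathrm{fib}(f))
\]
of the sequence gives an isomorphism in the range $i \le m$ (both neighbouring fiber groups vanish) and a surjection at $i = m+1$ (only the right-hand neighbour vanishes).

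Third, specialise this lemma to $f = \varphi_{\Sn n}$, $m = 2n-2$ and $i = n+k$. The isomorphism condition becomes $n+k \le 2n-2$, i.e.\ $n \ge k+2$, and the surjection condition becomes $n+k = 2n-1$, i.e.\ $n = k+1$, which is precisely what is claimed.

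The only genuine obstacle is the second step: setting up the long exact sequence of homotopy groups of a fiber sequence in HoTT. This is standard (and already available in the HoTT book), so the bulk of the proposition is a direct arithmetic consequence of Freudenthal; the remaining arithmetic and the identification of the induced map with the suspension map $\Sigma : \pi_{n+k}(\Sn n) \to \pi_{n+k+1}(\Sn{n+1})$ are routine.
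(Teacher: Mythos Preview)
Your proposal is correct and follows exactly the route the paper takes: the paper simply observes that $\Sn n$ is $(n-1)$-connected, applies Freudenthal to get that $\varphi_{\Sn n}$ is $(2n-2)$-connected, and then says ``on homotopy groups it gives the following result'' without spelling out the translation. You have filled in that translation step via the long exact sequence of the fiber sequence, which is the standard way to do it and is precisely what the paper leaves implicit.
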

This means that the groups $\pi_{n+k}(\Sn n)$ (for a fixed $k$) stabilize for a large enough $n$.
In particular, for $k=1$ we have the following result.
\begin{corollary}\label{pinsn}
  For $n\ge3$ we have $\pi_{n+1}(\Sn n)\simeq\pi_4(\Sn3)$ and the map $\pi_3(\Sn2)\to\pi_4(\Sn3)$ is
  surjective.
\end{corollary}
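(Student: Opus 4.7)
The plan is to specialize the preceding proposition to $k = 1$. With this choice, the suspension-induced map $\pi_{n+1}(\Sn n)\to\pi_{n+2}(\Sn{n+1})$ is an isomorphism whenever $n\ge 3$ (the condition $n\ge k+2$) and is surjective when $n = 2$ (the boundary condition $n = k+1$). The surjection $\pi_3(\Sn 2)\twoheadrightarrow\pi_4(\Sn 3)$ claimed in the corollary is then immediate, being exactly the case $n = 2$.

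For the isomorphism $\pi_{n+1}(\Sn n)\simeq\pi_4(\Sn 3)$ for every $n\ge 3$, I would proceed by induction on $n$. The base case $n = 3$ is the identity isomorphism. For the inductive step at $n+1$, assuming the isomorphism at level $n$, I would compose the inverse of the stabilization isomorphism $\pi_{n+1}(\Sn n)\xrightarrow{\sim}\pi_{n+2}(\Sn{n+1})$ (valid because $n\ge 3$ satisfies the iso regime) with the inductive hypothesis to obtain $\pi_{n+2}(\Sn{n+1})\simeq\pi_4(\Sn 3)$.

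No real obstacle arises; the corollary is essentially a bookkeeping consequence of the preceding proposition. The only point requiring minor care is distinguishing correctly the two regimes delineated by the conditions $n\ge k+2$ and $n = k+1$, and observing that at $k = 1$ the surjectivity border sits exactly at $\pi_3(\Sn 2)\to\pi_4(\Sn 3)$, while every subsequent step $n\ge 3$ falls into the isomorphism regime, so that the whole tower of groups $\pi_{n+1}(\Sn n)$ for $n\ge 3$ collapses to the single group $\pi_4(\Sn 3)$.
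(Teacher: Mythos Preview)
Your proposal is correct and matches the paper's approach exactly: the corollary is stated immediately after the proposition with the remark ``In particular, for $k=1$ we have the following result,'' so the paper treats it as a direct specialization, just as you do. Your induction to chain the stabilization isomorphisms is a reasonable elaboration of what the paper leaves implicit.
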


Note that even though we know that $\pi_3(\Sn2)\simeq\Z$ (from the Hopf fibration), as we are
working constructively this does \emph{not} imply that $\pi_4(\Sn3)$ is of the form $\Z/n\Z$ for
some $n:\N$. Indeed, it cannot be proved constructively that every subgroup of $\Z$ is of the form
$n\Z$, as there is no way to compute this $n$ in general. In this case, however, we can use the
James construction to give an explicit definition of the kernel of that map. We will need the
Blakers--Massey theorem (see \cite{blakersmassey}):

\begin{proposition}[Blakers--Massey theorem]
  Given two maps $f:C\to A$ and $g:C\to B$, we consider the types $D\defeq A\sqcup^CB$,
  \[E\defeq\sum_{a:A}\sum_{b:B}(\inl(a)=_D\inr(b)),\]
  and the map $h : C\to E$ defined by $h(c) \defeq (f(c),g(c),\push(c))$.
  \[
  \begin{tikzcd}
    C \arrow[rrd,bend left,"g"] \arrow[rdd,bend right,"f"'] \arrow[rd,dashed,"h"] & & \\
    & E \arrow[r] \arrow[d] \arrow[rd,phantom,"\lrcorner",at start] & B \arrow[d,"\inr"]\\
    & A \arrow[r,"\inl"'] & D
  \end{tikzcd}
  \]
  If $f$ is $n$-connected and $g$ is $m$-connected, then $h$ is $(n+m)$-connected.
\end{proposition}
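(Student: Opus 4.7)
My plan is to follow the strategy of Lumsdaine--Finster--Licata. By Proposition \ref{inductionconnected}, to show that $h$ is $(n+m)$-connected it suffices to show that for every family $P:E\to\Type$ of $(n+m)$-types and every $k:(c:C)\to P(h(c))$, there exists an extension $\tilde k:(e:E)\to P(e)$ with $\tilde k\circ h$ homotopic to $k$. Unfolding the definition of $E$, this reduces to the following extension problem: given a family $P(a,b,p)$ of $(n+m)$-types parameterized by $a:A$, $b:B$, $p:\inl(a)=_D\inr(b)$, and given values $k(c):P(f(c),g(c),\push(c))$ for every $c:C$, construct coherent values $\tilde k(a,b,p)$ for all such triples.

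The key technical ingredient is a wedge connectivity lemma for families of $(n+m)$-types: given $f:C\to A$ that is $n$-connected and $g:C\to B$ that is $m$-connected, a family $Q:A\times B\to\Type$ of $(n+m)$-types, and compatible partial sections $u$ on the image of $(f,1_B)$ and $v$ on the image of $(1_A,g)$ agreeing on their overlap, there exists a common extension to $A\times B$. This is analogous to the lemma proved inside Proposition \ref{pushoutproduct}: one applies Proposition \ref{inductionconnected} to $f$ to extend along the first factor, using Proposition \ref{inductionconnectedtruncated} applied to $g$ to check that the relevant space of extensions agreeing with $u$ on the overlap is itself sufficiently truncated.

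Using this wedge lemma, I would introduce a ``code'' family $\mathrm{Code}:A\to D\to\Type$ built by pushout induction on the second argument, designed so that $\mathrm{Code}(a,\inr(b))$ classifies candidates for the extension $\tilde k(a,b,-)$, with a symmetric family $\mathrm{Code}':B\to D\to\Type$ running the induction from the other side; one then proves that these are equivalent to the identity type $\inl(a)=_D\inr(b)$, at which point $\tilde k$ can simply be read off. The main obstacle will be the coherence management at the $\push$ constructor of $D$: each such constructor forces us to glue two half-extensions as a $2$-cell, which requires a further application of the wedge connectivity lemma applied to a family of equalities, hence at the reduced truncation level $n+m-1$; at the next layer, to the equations between those equalities, hence at $n+m-2$; and so on. Ensuring that all these iterated truncation shifts line up, and that the resulting equivalences between $\mathrm{Code}$ and the identity type are natural, is the delicate bookkeeping at the heart of the proof.
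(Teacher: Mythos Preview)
The paper does not prove this proposition: it merely states the Blakers--Massey theorem and cites \cite{blakersmassey} for the proof, so there is no in-paper argument to compare your proposal against. Your sketch is in the spirit of the cited Hou--Finster--Licata--Lumsdaine proof (encode--decode via a code family over the pushout, built using a wedge-connectivity principle derived from the connectivity of $f$ and $g$), which is appropriate; but as presented it remains only a high-level outline, and the hard part---constructing the code family and verifying the coherences at the $\push$ constructor---is precisely what you defer with phrases like ``the delicate bookkeeping at the heart of the proof.'' Since the paper treats this result as a black box, your outline is adequate as a pointer to the literature, but it is not itself a proof.
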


We now prove the following proposition.
\begin{proposition}\label{kerneljames}
  For $n\ge2$, the kernel of the surjective map $\pi_{2n-1}(\Sn n)\to \pi_{2n}(\Sn{n+1})$ induced by
  $\varphi_{\Sn n}$ is generated by the Whitehead product $[i_n,i_n]$, where $i_n$ is the generator
  of $\pi_n(\Sn n)$.
\end{proposition}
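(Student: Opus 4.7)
The plan is to factor $\varphi_{\Sn{n}}$ through $J_2\Sn{n}$ and use that $J_2\Sn{n}$ is the homotopy cofiber of the Whitehead product $[i_n,i_n]:\Sn{2n-1}\to\Sn{n}$. First, by Corollary \ref{corjames} applied with $A=\Sn{n}$ (so $k=n-1$), the natural map $J_2\Sn{n}\to\Omega\Sn{n+1}$ is $(3n-2)$-connected, hence induces an isomorphism on $\pi_{2n-1}$ for $n\ge 1$. Consequently, the kernel of the surjection $\pi_{2n-1}(\Sn{n})\to\pi_{2n}(\Sn{n+1})$ induced by $\varphi_{\Sn{n}}$ coincides with the kernel of the map $\pi_{2n-1}(\Sn{n})\to\pi_{2n-1}(J_2\Sn{n})$ induced by the canonical inclusion $\Sn{n}=J_1\Sn{n}\to J_2\Sn{n}$.

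Second, I paste the defining pushout of $J_2\Sn{n}$ on top of the pushout decomposition of $\Sn{n}\times\Sn{n}$ from Proposition \ref{whiteheadmap}, obtaining the pushout square
\[
\begin{tikzcd}
  \Sn{2n-1} \arrow[r,"\fold\circ W_{n,n}"] \arrow[d] & \Sn{n} \arrow[d] \\
  \Unit \arrow[r] & J_2\Sn{n}
\end{tikzcd}
\]
whose top map is, by the definition of the Whitehead product, the class $[i_n,i_n]$. Hence $J_2\Sn{n}$ is the mapping cone of $[i_n,i_n]$.

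Third, I apply the Blakers--Massey theorem to this pushout. The map $\Sn{2n-1}\to\Unit$ is $(2n-2)$-connected, and for $n\ge 2$ the map $[i_n,i_n]:\Sn{2n-1}\to\Sn{n}$ is at least $0$-connected, since the long exact sequence of its homotopy fiber together with $\pi_1(\Sn{n})=0$ forces $\pi_0$ of that fiber to be trivial. Blakers--Massey thus provides a $(2n-2)$-connected comparison map $h:\Sn{2n-1}\to E$, where $E$ is the homotopy fiber of $\Sn{n}\to J_2\Sn{n}$; in particular $h_*$ is surjective on $\pi_{2n-1}$, and by the explicit formula $h(c)=([i_n,i_n](c),\star,\push(c))$ the composite $\Sn{2n-1}\xrightarrow{h}E\to\Sn{n}$ is exactly $[i_n,i_n]$. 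The long exact sequence of the fibration $E\to\Sn{n}\to J_2\Sn{n}$ then identifies the kernel of $\pi_{2n-1}(\Sn{n})\to\pi_{2n-1}(J_2\Sn{n})$ with the image of $\pi_{2n-1}(E)\to\pi_{2n-1}(\Sn{n})$, and the surjectivity of $h_*$ combined with the identification of the composite through $E$ with $[i_n,i_n]$ shows this image is precisely the cyclic subgroup of $\pi_{2n-1}(\Sn{n})$ generated by $[i_n,i_n]$.

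The main obstacle will be the careful bookkeeping identifying the Blakers--Massey class with $[i_n,i_n]$ on $\pi_{2n-1}$, and making precise in the first step the factorization $\Sn{n}\to J_2\Sn{n}\to\Omega\Sn{n+1}$, i.e.\ checking that the composite of the canonical inclusion with the equivalence $J\Sn{n}\simeq\Omega\Sn{n+1}$ from section \ref{sec:equivJA} agrees up to homotopy with the map $\varphi_{\Sn{n}}$ of Corollary \ref{freud}.
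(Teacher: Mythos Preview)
Your proof is correct and follows essentially the same route as the paper: factor through $J_2\Sn n$ via Corollary~\ref{corjames}, identify $J_2\Sn n$ as the cofiber of $\fold\circ W_{n,n}$ by pasting the pushout from Proposition~\ref{whiteheadmap} onto the defining pushout, apply Blakers--Massey to obtain a highly connected map $\Sn{2n-1}\to E$ into the fiber, and read off the kernel from the long exact sequence. The only cosmetic difference is that the paper uses the sharper observation that any map between $(n-1)$-connected types is $(n-2)$-connected, yielding a $(3n-4)$-connected comparison map rather than your $(2n-2)$-connected one; your weaker estimate already suffices for surjectivity on $\pi_{2n-1}$, so nothing is lost.
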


\begin{proof}
  Applying corollary \ref{corjames} to $\Sn n$ which is $(n-1)$-connected, we get a
  $(3n-2)$-connected map from $J_2(\Sn n)$ to $\Omega\Sn{n+1}$. In particular, given that
  $2n-1<3n-2$, it means that
  \[\pi_{2n-1}(J_2(\Sn n))\simeq\pi_{2n-1}(\Omega\Sn{n+1})\simeq\pi_{2n}(\Sn{n+1}),\] so we now
  study the map $\pi_{2n-1}(\Sn n)\to \pi_{2n-1}(J_2(\Sn n))$.  We know from the James construction
  that \[J_2(\Sn n)\simeq(\Sn n\times\Sn n)\sqcup^{\Sn n\vee\Sn n}\Sn n,\]
  hence using the decomposition of $\Sn n\times\Sn n$ given in proposition \ref{whiteheadmap}, we
  get
  \[J_2(\Sn n)\simeq(\Unit\sqcup^{\Sn{2n-1}}(\Sn n\vee\Sn n))\sqcup^{\Sn n\vee\Sn n}\Sn n\]
  where the map from $\Sn n\vee\Sn n$ to the pushout on the left is $\inr$ (i.e.\ it’s the identity
  on the second component). This reduces to
  \[J_2(\Sn n)\simeq\Unit\sqcup^{\Sn{2n-1}}\Sn n,\]
  where the map $\Sn{2n-1}\to\Sn n$ is the Whitehead map $W_{n,n}:\Sn{2n-1}\to\Sn n\vee\Sn n$
  composed with the folding map $\fold_{\Sn n}:\Sn n\vee\Sn n\to\Sn n$.

  We now take the fiber $P$ of the map $\Sn n\to J_2(\Sn n)$, which is the pullback of the two
  maps from $\Sn n$ and $\Unit$ to $J_2(\Sn n)$
  \[
  \begin{tikzcd}
    \Sn{2n-1} \arrow[rrr,"{\fold_{\Sn n}\circ W_{n,n}}"] \arrow[ddd] \arrow[rd, dashed] &&& \Sn n \arrow[ddd] \\
    & P \arrow[rru] \arrow[ddl] \arrow[rrdd,phantom,"\lrcorner",at start] && \\ \\
    \Unit \arrow[rrr] &&& J_2(\Sn n) \arrow[lluu,phantom,"\ulcorner",at start]
  \end{tikzcd}
  \]
  The map from $\Sn{2n-1}$ to $\Unit$ is $(2n-2)$-connected and the map from $\Sn{2n-1}$ to $\Sn n$
  is $(n-2)$-connected (indeed, every map between two $(n-1)$-connected types is $(n-2)$-connected),
  hence using the Blakers-Massey theorem, the dashed map from $\Sn{2n-1}$ to $P$ is
  $(3n-4)$-connected. Given that $2n-2\le3n-4$, it follows that
  $\pi_{2n-2}(P)\simeq\pi_{2n-2}(\Sn{2n-1})\simeq0$.

  The long exact sequence of homotopy groups for $P\to \Sn n\to J_2(\Sn n)$ is
  \[
  \begin{tikzcd}
    \pi_{2n-1}(P) \arrow[r]& \pi_{2n-1}(\Sn n) \arrow[r]& \pi_{2n-1}(J_2(\Sn n)) \arrow[r]& \pi_{2n-2}(P)=0,
  \end{tikzcd}
  \]
  therefore $\pi_{2n-1}(J_2(\Sn n))$ is the quotient of $\pi_{2n-1}(\Sn n)$ by the image of the map
  $\pi_{2n-1}(P)\to\pi_{2n-1}(\Sn n)$. But the dashed map is surjective on $\pi_{2n-1}$, so it’s the
  same as the image of the map $\pi_{2n-1}(\Sn{2n-1})\to\pi_{2n-1}(\Sn n)$, which is generated by
  $[i_n,i_n]$, by definition of the Whitehead product.

  Therefore, the kernel of the map $\pi_{2n-1}(\Sn n)\to\pi_{2n}(\Sn{n+1})$ is generated by
  $[i_n,i_n]$. \qed
\end{proof}
In particular, applying this result to $n=2$ and using the fact that $\pi_3(\Sn2)\simeq\Z$, we get the
following corollary.
\begin{corollary}\label{firstpi4s3}
  We have
  \[\pi_4(\Sn3)\simeq\Z/n\Z,\]
  where $n$ is the absolute value of the image of $[i_2,i_2]$ by the equivalence
  $\pi_3(\Sn2)\stackrel\sim\longrightarrow\Z$.
\end{corollary}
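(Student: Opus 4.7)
The plan is to instantiate Proposition~\ref{kerneljames} at $n=2$ and combine it with Corollary~\ref{pinsn} and the Hopf fibration. Concretely, Proposition~\ref{kerneljames} with $n=2$ yields a short exact sequence
\[
\langle [i_2,i_2]\rangle \hookrightarrow \pi_3(\Sn2) \twoheadrightarrow \pi_4(\Sn3),
\]
so first I would rewrite $\pi_4(\Sn3)$ as the quotient $\pi_3(\Sn2) / \langle [i_2,i_2]\rangle$ of $\pi_3(\Sn2)$ by the cyclic subgroup generated by $[i_2,i_2]$. Note that Corollary~\ref{pinsn} is not strictly needed here since Proposition~\ref{kerneljames} already targets $\pi_4(\Sn3)$ directly when $n=2$; I mention it only as a sanity check that the map $\pi_3(\Sn2)\to\pi_4(\Sn3)$ really is the stabilization map.

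Next I would transport this quotient along the Hopf equivalence $\pi_3(\Sn2)\stackrel{\sim}{\longrightarrow}\Z$. Let $m\in\Z$ denote the image of $[i_2,i_2]$ under this equivalence. The image of $\langle [i_2,i_2]\rangle$ is then the subgroup $m\Z\subseteq\Z$, and since the generator $m$ is an explicit integer we can take $n\defeq|m|$ so that $m\Z=n\Z$. This gives $\pi_4(\Sn3)\simeq\Z/n\Z$ as required. The key constructive point is that we are not quoting the classification of subgroups of $\Z$ in general, but only inverting a concrete element of $\Z$, which is unproblematic even without excluded middle.

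The main subtle step is ensuring that the abstract quotient of abelian groups $\pi_3(\Sn2)/\langle [i_2,i_2]\rangle$ from Proposition~\ref{kerneljames} really agrees, as a group, with $\Z/n\Z$ after transport along the Hopf isomorphism. This amounts to checking that the equivalence $\pi_3(\Sn2)\simeq\Z$ is an isomorphism of abelian groups (not just of pointed sets) so that subgroups are sent to subgroups and the induced map on quotients is again a group isomorphism; this is standard once the Hopf equivalence is constructed in the pointed/abelian-group category, as in \cite{hottbook}. Apart from this bookkeeping, no further work is needed: the hard homotopical content is concentrated in Proposition~\ref{kerneljames}, which has already been proved, and in the Hopf fibration.
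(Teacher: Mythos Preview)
Your proposal is correct and follows essentially the same approach as the paper: instantiate Proposition~\ref{kerneljames} at $n=2$, then use the Hopf equivalence $\pi_3(\Sn2)\simeq\Z$ to identify the quotient $\pi_3(\Sn2)/\langle[i_2,i_2]\rangle$ with $\Z/n\Z$ for the explicit integer $n=|m|$. Your remarks about the constructive point (that we are transporting a single, explicitly given element rather than invoking a general classification of subgroups of $\Z$) and about the Hopf map being a group isomorphism are exactly the bookkeeping the paper leaves implicit in its one-line proof.
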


% This result is quite remarkable in that even though it is a constructive proof, it is not at all
% obvious how to actually compute this $n$. At the time of writing, we still haven’t managed to
% extract its value from its definition. A complete and concise definition of this number $n$ is
% presented in appendix \ref{ch:defn}, for the benefit of someone wanting to implement it in a
% prospective proof assistant. In the rest of this thesis, we give a mathematical proof in homotopy
% type theory that $n=2$.

%\begin{acknowledgements}
%If you'd like to thank anyone, place your comments here
%and remove the percent signs.
%\end{acknowledgements}

% BibTeX users please use one of
%\bibliographystyle{spbasic}      % basic style, author-year citations
%\bibliographystyle{spmpsci}      % mathematics and physical sciences
%\bibliographystyle{spphys}       % APS-like style for physics
%\bibliography{}   % name your BibTeX data base

\end{document}